  \providecommand\BibTeX{{%
    \normalfont B\kern-0.5em{\scshape i\kern-0.25em b}\kern-0.8em\TeX}}}
\newcommand{\ie}{\emph{i.e., }}
\newcommand{\eg}{\emph{e.g., }}
\newcommand{\wrt}{\emph{w.r.t. }}
\newcommand{\zy}[1]{\textcolor{black}{#1}}
\newcommand{\sth}[1]{\textcolor{black}{#1}}
\useunder{\uline}{\ul}{}
\newtheorem{theorem}{Theorem}
\newtheorem{definition}{Definition}[section]
\begin{document}

\title{Fair Recommendations with Limited Sensitive Attributes: A Distributionally Robust Optimization Approach}

\author{Tianhao Shi}
\affiliation{%
  \institution{University of Science and Technology of China}
    \city{Hefei}
  \country{China}}
\email{sth@mail.ustc.edu.cn}

\author{Yang Zhang}
\authornote{Corresponding authors.}
\affiliation{%
  \institution{University of Science and Technology of China}
    \city{Hefei}
  \country{China}}
  \email{zyang1580@gmail.com}

\author{Jizhi Zhang}
\affiliation{%
  \institution{University of Science and Technology of China}
    \city{Hefei}
  \country{China}}
\email{cdzhangjizhi@mail.ustc.edu.cn}

\author{Fuli Feng}
\authornotemark[1]
\affiliation{%
  \institution{University of Science and Technology of China}
    \city{Hefei}
  \country{China}}
\email{fulifeng93@gmail.com}

\author{Xiangnan He}
\affiliation{%
  \institution{University of Science and Technology of China}
    \city{Hefei}
  \country{China}}
\email{xiangnanhe@gmail.com}

\renewcommand{\shortauthors}{Tianhao Shi, et al.}
\begin{abstract}

As recommender systems are indispensable in various domains such as job searching and e-commerce, providing equitable recommendations to users with different sensitive attributes becomes an imperative requirement.
Prior approaches for enhancing fairness in recommender systems presume the availability of all sensitive attributes, which can be difficult to obtain due to privacy concerns or inadequate means of capturing these attributes.
In practice, the efficacy of these approaches is limited, pushing us to investigate ways of promoting fairness with limited sensitive attribute information. 

Toward this goal, it is important to reconstruct missing sensitive attributes.
Nevertheless, reconstruction errors are inevitable due to the complexity of real-world sensitive attribute reconstruction problems and legal regulations. Thus, we pursue fair learning methods that are robust to reconstruction errors.
To this end, we propose \textit{Distributionally Robust Fair Optimization} (DRFO), which minimizes the worst-case unfairness over all potential probability distributions of missing sensitive attributes instead of the reconstructed one to account for the impact of the reconstruction errors.
We provide theoretical and empirical evidence to demonstrate that our method can effectively ensure fairness in recommender systems when only limited sensitive attributes are accessible.

\end{abstract}

\vspace{-2pt}
\begin{CCSXML}
<ccs2012>
<concept>
<concept_id>10002951.10003317.10003347.10003350</concept_id>
<concept_desc>Information systems~Recommender systems</concept_desc>
<concept_significance>500</concept_significance>
</concept>
</ccs2012>
\end{CCSXML}

\ccsdesc[500]{Information systems~Recommender systems}

\vspace{-2pt}
\keywords{Group Fairness, Recommender System, Distributionally Robust Optimization, Partial Fairness}


\maketitle
\section{Introduction}

Recommender system has become a vital technology in various Web applications, including job searching~\cite{lambrecht2019algorithmic}, online advertising~\cite{wu2021fairness}, and e-commerce~\cite{wang2016your}. By providing personalized information filtering based on user interests, recommender systems have significant social influence~\cite{sml}. 
Unfair recommendations that discriminate against specific user groups~\cite{recfair-survey} would negatively impact the Web and society. For instance, previous research~\cite{lambrecht2019algorithmic} on job recommendations found that women were exposed to fewer advertisements for high-paying jobs and career coaching services compared to men, perpetuating gender pay gap in the workforce.
Clearly, it is of paramount importance to maintain fairness in recommender systems to ensure their trustworthiness and regulatory compliance~\cite{gdpr}. 


The existing research to optimize the fairness of recommendations can be further classified into four main categories:

\begin{itemize}[leftmargin=*]
    \item Data-oriented methods~\cite{ekstrand2018all,rastegarpanah2019fighting}, which adjusts the training data according to sensitive attributes by resampling~\cite{ekstrand2018all} and adding antidote data~\cite{rastegarpanah2019fighting}, etc. 
    \item Adversarial learning-based methods~\cite{wu2021fairness,fairlisa}, which learn fair representations with a min-max game to prevent encoding of sensitive attributes in user embeddings or graph structures.
    
    \item Regularization-based methods~\cite{yao2017beyond,recoindependence,tensorfair,liu2020balancing}, which incorporate fairness metrics such as absolute difference in utility between groups with different sensitive attributes into the objective function~\cite{yao2017beyond,recoindependence,tensorfair} or the reward function~\cite{liu2020balancing}.
    \item Re-ranking methods~\cite{user-orient,wu2021tfrom}, which directly adjusts recommendation results to enhance fairness across user groups with different sensitive attributes. 

\end{itemize}
\vspace{-3pt}
We contend that the current methods are significantly constrained by their reliance on full access to sensitive attributes~\cite{recfair-survey}, a condition that is often difficult to satisfy in real-world scenarios.
Firstly, users have the entitlement, as per regulatory frameworks like the General Data Protection Regulation (GDPR)~\cite{gdpr} of the European Union, to decline the disclosure of their sensitive data at any moment. 
For instance, only 17\% of participants reported willing of sharing their income information in electronic commerce~\cite{privacy-study}.
Moreover, many online retail and advertising recommendations can be accessed without registering or submitting personal information~\cite{wang2016your}.
Consequently, a conundrum emerges: how can fair recommendation be obtained when only limited sensitive attributes are available~\cite{recfair-survey}?
To address this issue, a default choice is reconstructing the missing sensitive attributes of users from the available personal information such as historical interactions~\cite{liu2019efficient}.
However, reconstruction errors are inevitable due to the inherent complexity and noise of user information in recommender systems.
Moreover, striving for perfectly reconstructed attributes can potentially violate user privacy, raising ethical concerns and being constrained by legal limitations.
For instance, GDPR~\cite{gdpr} mandates obtaining user consent for modeling user profiles (including sensitive attributes). Consequently, a subset of users may not have their sensitive attributes reconstructed\footnote{This can also be regarded as a special scenario with significant reconstruction errors.}. 
The aforementioned challenges underscore the significance of devising alternative approaches that are robust to the reconstruction errors of sensitive attributes.

To this end, we propose a new framework to pursue fair recommendations under reconstructed sensitive attributes with errors called \textit{Distributionally Robust Fair Optimization} (DRFO). 
Instead of optimizing fairness over the reconstructed sensitive attributes, DRFO minimizes the worst-case unfairness over an ambiguity set of all potential distributions of missing sensitive attributes to account for the impact of reconstruction errors.
Furthermore, for users who forbid the reconstruction of their sensitive attributes, DRFO can provide fair recommendations for them by considering a larger ambiguity set of distributions.
 We theoretically demonstrate that DRFO ensures recommendation fairness in the face of sensitive attribute reconstruction errors, even in the absence of such reconstruction.
 Extensive experiments on two datasets verify the effectiveness of our approaches.

The main contributions are summarized as follows:
\begin{itemize}[leftmargin=*]
\item We propose a new framework for optimizing the user-side fairness in recommender systems with limited sensitive attributes.
\item We provide theoretical evidence that our method can ensure fair recommendations in the face of sensitive attribute reconstruction errors, even in the absence of such reconstruction.
\item We conduct extensive experiments on two datasets, validating the rationality and effectiveness of the proposed framework.
\end{itemize}
\vspace{-3pt}
\section{Preliminaries}


In this study, we aim to achieve fairness in recommender systems with limited sensitive attributes. We consider the widely used Demographic Parity (DP) as an exemplary fairness criterion to investigate this problem. In the following, we first introduce the notation used in this paper and then provide a brief overview of DP fairness.

\vspace{-2pt}
\subsection{Notation}
In this study, we use uppercase letters (e.g., $R$) to represent random variables, lowercase letters (e.g., $r$) for specific values of these variables, and calligraphic fonts (e.g., $\mathcal{R}$) to represent the sample space of the variable.
Let $U$, $V$, $S$, and $R$ denote the user, item, user sensitive attribute, and rating, respectively.
Let $\mathcal{D}$ denote the historical data. Each sample within $\mathcal{D}$ is denoted as $(u,v,s,r)$, where $u\in \mathcal{U}$, $v\in \mathcal{V}$, $r\in \mathcal{R}$, and $s\in\mathcal{S}$.
In this work, we consider the binary rating, \ie $\mathcal{R}=\{0,1\}$. 
Additionally, we assume the sensitive feature to be binary but potentially unknown for a subset of users. We split $\mathcal{D}$ into two parts: $\mathcal{D}_{k}$ with \textit{known} sensitive features and $\mathcal{D}_{m}$ with \textit{missing} sensitive features, and we have $\mathcal{D}=\mathcal{D}_{k} \cup \mathcal{D}_{m}$ and $\mathcal{D}_{k}\cap \mathcal{D}_{m}= \varnothing $. For convenience, we denote the samples in $\mathcal{D}$ with the sensitive feature $S=s$ as $\mathcal{D}^{(s)}$, similar for $\mathcal{D}_{k}^{(s)}$ and $\mathcal{D}_{m}^{(s)}$.
\vspace{-2pt}
\subsection{Fairness}
\textit{Demographic Parity}~\cite{dp} is a widely studied fairness criterion in recommendation~\cite{recoindependence,tensorfair}. \sth{DP's definition is the model's rating prediction $\hat{R}$ should be independent of the sensitive attribute $S$.}
A model achieving DP fairness would generate recommendations without relying on $S$, 
 thereby satisfying legal requirements or user demands against discrimination on model output~\cite{recoindependence}.
We follow previous work~\cite{tensorfair} to quantify DP with the mean absolute difference (MAD) between ratings of different groups:
\begin{equation}
\small
\vspace{-2pt}
\label{eq: MAD}
    \left | \mathbb{E}\left [\hat{R}|S=0 \right ]-\mathbb{E}\left [\hat{R}|S=1\right ] \right |,
\end{equation}
where $\mathbb{E}\left [\hat{R}|S=0 \right ]$ and $\mathbb{E}\left [\hat{R}|S=1 \right ]$ denote the expectation of prediction $\hat{R}$ over groups with $S=0$ and $S=1$, respectively. 
A diminished MAD level signifies a heightened degree of DP fairness.

\vspace{1pt}
\noindent\textbf{Fair learning}.
To achieve DP fairness, we could take the regularization-based method~\cite{yao2017beyond}, which directly incorporates the MAD metric into the training objective. Formally,
\begin{equation}
\small
\label{eq:mf-fair-reg}
\vspace{-2pt}
    \min_{\theta} \  L \left (\theta \right )  + \lambda 
    \left | 
\mathbb{E}_{\mathcal{D}^{(0)}}\left [\hat{R}\right]- \mathbb{E}_{\mathcal{D}^{(1)}}\left [\hat{R} \right]\right |,
\vspace{-2pt}
\end{equation}
where $\lambda$ is a hyper-parameter to control the strength of the fairness regularization term,  $\mathbb{E}_{\mathcal{D}^{(s)}}\left [\hat{R}\right]$ is the average predicted rating over $\mathcal{D}^{(s)}$, \ie $\mathbb{E}_{\mathcal{D}^{(s)}}\left [\hat{R}\right]=\frac{1}{|\mathcal{D}^{(s)}|} \sum_{(u,v)\in \mathcal{D}^{(s)}}\hat{r}_{u,v}$
, and $L(\theta)$ is a recommendation loss (\eg binary cross-entropy loss~\cite{ncf}).
\vspace{1pt}

Typically, the regularization can be transformed into a set of constraints that minimize the discrepancy between the average predictions of a specific group and the overall predictions,
\begin{equation}
\small
\label{eq:fair-constraint}
\begin{aligned}
\vspace{-2pt}
& \underset{\theta}{\text{min}}
& & L \left (\theta \right ) \\
& \text{s.t.}
& & 
\mathbb{E}_{\mathcal{D}^{(s)}}\left [\hat{R}\right]- \mathbb{E}_{\mathcal{D}}\left [\hat{R}\right] = 0, \quad  s=0,1,
\vspace{-4pt}
\end{aligned}
\end{equation}
where the constraint ensures the expected predicted rating in $\mathcal{D}^{(s)}$ is equal to the expected predicted rating in the entire dataset $\mathcal{D}$. 
Preserving any single constraint in Equation~\eqref{eq:fair-constraint} is sufficient to promote fairness under the binary-sensitive attribute scenario while preserving multiple constraints is intended for non-binary cases.

\vspace{-3pt}
\section{METHODOLOGY}
To build a fair recommender system that addresses the chanllenge of missing sensitive attributes among some users, a seemingly workable solution is to directly apply Fair Learning with Reconstructed Sensitive Attributes (FLrSA). In this section, we outline this solution and highlight its limitations.
Subsequently, we introduce the proposed Distributionally Robust Fairness Optimization to overcome the impact of reconstruction errors. Finally, we discuss the extension of DRFO for situations where certain users are reluctant to have their sensitive attributes reconstructed.
\vspace{-3pt}
\subsection{FLrSA}
\label{sec:Intuitive}

To achieve fair recommendations in the presence of missing sensitive attributes, a seemingly workable solution involves reconstructing the missing attributes and subsequently applying fair learning methods based on the reconstructed sensitive attributes. Specifically, we can follow the two steps below:

\textit{Step 1: Sensitive attribute reconstruction}. 
Previous research has shown that user sensitive attributes can be reconstructed using available user information, such as historical interactions~\cite{weinsberg2012blurme,wang2016your} and zip codes~\cite{zip-code}. Therefore, we can train a classifier over $\mathcal{D}_k$ to predict missing user sensitive attributes in $\mathcal{D}_{m}$. Let $\hat{S}\in \{0,1\}$ denote the reconstructed sensitive attribute. 
\sth{Subsequently, we can incorporate the reconstructed sensitive attributes into $\mathcal{D}_m$ and generate a new dataset $\hat{\mathcal{D}}_{m}$.}

\textit{Step 2: Fair learning}. Next, we perform fair learning over $\mathcal{D}_k$ and $\hat{\mathcal{D}}_m$ based on Equation~\eqref{eq:fair-constraint}, which is reformulated as:
\begin{equation}
\small
\label{eq:naive-fair-constraint-with-reconstructed}
\begin{aligned}
& \qquad \qquad \qquad \qquad \underset{\theta}{\text{min}}
\  L \left (\theta \right ) 
\\
& \text{s.t.}
\quad \eta_{k}^{(s)}\mathbb{E}_{\mathcal{D}_{k}^{(s)}}\left [\hat{R}\right] + \eta_{m}^{(s)}\mathbb{E}_{\hat{\mathcal{D}}_{m}^{(s)}}\left [\hat{R}\right] - \mathbb{E}_{\mathcal{D}}\left [\hat{R}\right] = 0, \  s=0,1, 
\end{aligned}
\vspace{-2pt}
\end{equation}
where $\hat{\mathcal{D}}^{(s)}_{m}$ is a subset of $\mathcal{\hat{D}}_{m}$ with the reconstructed attribute $\hat{S}=s$, $  \mathbb{E}_{\mathcal{D}_{k}^{(s)}}\left [\hat{R}\right]$ is the average predicted rating over $\mathcal{D}_{k}^{(s)}$, and:
\begin{equation}
\small
\vspace{-2pt}
\label{eq:intuitive-solution}
    \eta_{k}^{(s)}=\frac{|\mathcal{D}^{(s)}_{k}|}{|\mathcal{D}_{k}^{(s)}|
      + | \hat{\mathcal{D}}_{m}^{(s)}|}, \,  \eta_{m}^{(s)}=\frac{|\hat{\mathcal{D}}_{m}^{(s)}|}{|\mathcal{D}_{k}^{(s)}| +|\hat{\mathcal{D}}_{m}^{(s)}|}.
\end{equation}

\noindent\textbf{Limitations.} This method relies on the accuracy of sensitive attribute reconstruction.
However, achieving an entirely accurate classifier poses challenges due to the inherent complexity and noise of user information in recommender systems.
This leads to reconstruction errors, which could compromise the fairness of recommendations for some users.
Furthermore, this method relies on the permission of sensitive attribute reconstruction from users, which may not be achievable for all users due to legal restrictions.
\subsection{DRFO}
\label{sec:DRFO}
\sth{Reconstruction errors significantly constrain the vanilla FLrSA, as they introduce a discrepancy between the reconstructed and the unknown true distribution concerning sensitive attributes.} Consequently, relying solely on the reconstructed distribution may compromise fairness performance. 
Nevertheless, the unknown true distribution lies within the proximity of the reconstructed distribution. By ensuring fairness in the vicinity of the reconstructed distribution, the model could achieve robust fairness for the unknown true distribution~\cite{wang2020robust}.
This inspires the development of DRFO, a novel approach to fairness with limited sensitive attributes.
DRFO has two main parts: 1) building the ambiguity set which encompasses the unknown true distribution based on the reconstructed sensitive attributes, and 2) ensuring fairness within the entire ambiguity set using DRO. Figure~\ref{fig:method} provides an overview of DRFO. \sth{For our discussion convenience, we assume that all users grant permission for the reconstruction of sensitive attributes in this subsection. The scenario where some users do not permit reconstruction due to privacy concerns will be discussed in Section~\ref{sec:discussion}.} 

\begin{figure}[t]
    \centering
    \includegraphics[width=0.45\textwidth]{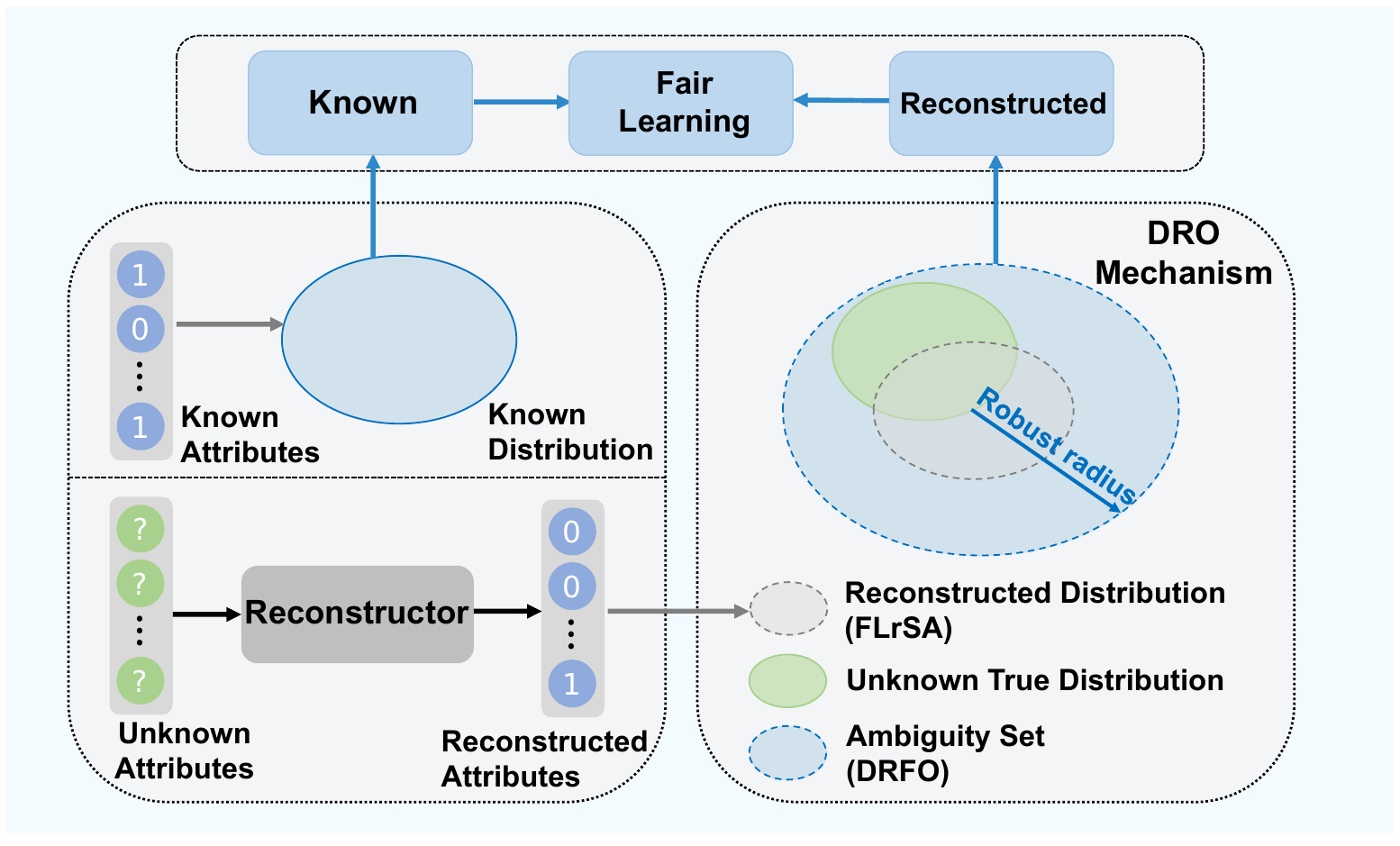}
    \vspace{-5pt}
    \caption{Illustration of FLrSA and DRFO for providing fair recommendations with limited sensitive attributes. After the reconstruction of unknown sensitive attributes, the FLrSA directly applies fair learning with the reconstructed distribution. Conversely, DRFO builds an ambiguity set that encompasses the unknown true distribution and guarantees fairness across the entire ambiguity set.}
    \vspace{-12pt}
    \label{fig:method}
\end{figure}

\noindent \textbf{Building ambiguity set}. 
An ambiguity set is a set of distributions centered around the reconstructed distribution. We denote the ambiguity set as $\mathbb{B}(\rho_{s}; \hat{Q}^{(s)}) \text{=} \{\tilde{Q}^{(s)}|dist(\tilde{Q}^{(s)}, \hat{Q}^{(s)})\leq \rho_{s}\}$, where $\hat{Q}^{(s)}$ denotes the reconstructed distribution, $dist(\cdot)$ denotes a distance metric, and $\rho_{s}$ denotes robust radius. By setting an appropriate value $\rho_{s}$, we can ensure that this set encompasses the unknown true distribution $Q^{(s)}$.
In the context of facilitating fair learning, as expressed in Equation~\eqref{eq:naive-fair-constraint-with-reconstructed} involving $\mathbb{E}_{\hat{\mathcal{D}}_{m}^{(s)}}[\hat{R}(U,V)]$, our method focuses on building an ambiguity set of joint distributions $(U,V)$ conditioned on a specific sensitive attribute value $s$. Specifically, we assume that the distribution of $(U,V)$ conditioned on $S\text{=}s$ follows $Q^{(s)}$. And we represent the joint distribution of $(U,V)$ conditioned on the reconstructed sensitive attribute $\hat{S}\text{=}s$ as $\hat{Q}^{(s)}$.

\vspace{3pt}
\noindent \textbf{Robust fair learning.} 
Robust learning solves an optimization problem that satisfies the fairness constraints in $\mathbb{B}(\rho_{s}; \hat{Q}^{(s)})$ as:
\begin{equation}
\small
\label{eq:constraint-fair-in-dist}
\begin{aligned}
& \qquad \qquad \qquad \qquad \underset{\theta}{\text{min}}
\ L \left (\theta \right )\\
& \text{s.t.}
\quad \eta_{k}^{(s)}\mathbb{E}_{\mathcal{D}_{k}^{(s)}}\left [\hat{R}\right] + \eta_{m}^{(s)}\mathbb{E}_{(U,V)\sim {\tilde{Q}}^{(s)}}\left [\hat{R}\right] - \mathbb{E}_{\mathcal{D}}\left [\hat{R}\right] = 0, 
\\
& \qquad \qquad \qquad \forall \tilde{Q}^{(s)} \in \mathbb{B}(\rho_{s}; \hat{Q}^{(s)})\,, s=0, 1 ,
\end{aligned}
\end{equation}
where $\mathbb{E}_{(U,V) \sim {\tilde{Q}}^{(s)}}\left [\hat{R} \right]$ denotes the expectation of $\hat{R}$ under the distribution $\tilde{Q}^{(s)}$.
Here, we transform the fairness constraints in Equation~\eqref{eq:naive-fair-constraint-with-reconstructed} from solely relying on the reconstructed distribution to guaranteeing fairness across the entire ambiguity set\footnote{\zy{We do not directly solve the optimization problem under these equality constraints in Equation~\eqref{eq:constraint-fair-in-dist}. Instead, we convert it into an optimization of the worst-case scenario problem, as expressed in Equation~\eqref{eq:emp-lag-form}}.}.
In this way, as long as the ambiguity set encompasses the unknown true distribution, we can achieve robust fairness.

Apparently, the key of DRFO lies in
1) building an appropriate ambiguity set which encompasses the unknown true distribution, and 2) solving the optimization problem in Equation~\eqref{eq:constraint-fair-in-dist}. Next, 
we elaborate how to achieve them.

\subsubsection{Building Ambiguity Set}
We now consider establishing an appropriate ambiguity set $\mathbb{B}(\rho_{s}; \hat{Q}^{(s)})$ that encompasses the unknown true distribution $Q^{(s)}$.
\sth{However, direct estimation of the exact distance between $Q^{(s)}$ and the reconstructed distribution $\hat{Q}^{(s)}$ is impossible as true sensitive attributes are unavailable.} Instead, we estimate an upper bound on the distance between them. Similar to previous work~\cite{wang2020robust}, using the Total Variation (TV) distance~\cite{TV-distance} ($TV \in [0,1]$, Appendix~\ref{sec:appendix}) as the distance metric, we could estimate an upper bound based on the error rate of sensitive attribute reconstruction. This is described in the following theorem.

\setcounter{theorem}{0}
\begin{theorem}
\vspace{-2pt}
\label{theo:tv-distance}
Assuming that the reconstructed sensitive attributes $\hat{S}$ have the same prior distribution as the true sensitive attributes $S$,
\ie $P(\hat{S})=P(S)$, 
the TV distance between $Q^{(s)}$ and $\hat{Q}^{(s)}$ is upper-bounded by the probability of incorrectly reconstructing the sensitive attributes, \ie $TV(Q^{(s)},\hat{Q}^{(s)})\leq P(S\neq\hat{S}|S=s)$. 

%

\vspace{-2pt}
\end{theorem}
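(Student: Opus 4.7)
The plan is to start from the variational definition $TV(Q^{(s)}, \hat{Q}^{(s)}) = \sup_{A} |Q^{(s)}(A) - \hat{Q}^{(s)}(A)|$, where $A$ ranges over measurable subsets of $\mathcal{U}\times\mathcal{V}$, and to show that the supremand is uniformly controlled by $P(S\neq\hat{S}\mid S=s)$.

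First, I would rewrite each conditional distribution as a ratio of joint probabilities: $Q^{(s)}(A) = P((U,V)\in A, S=s)/P(S=s)$ and likewise for $\hat{Q}^{(s)}(A)$ with $\hat{S}$ in place of $S$. The prior matching hypothesis $P(S=s)=P(\hat{S}=s)$ lets me use a common denominator, reducing the difference to
\begin{equation*}
Q^{(s)}(A) - \hat{Q}^{(s)}(A) = \frac{P((U,V)\in A, S=s) - P((U,V)\in A, \hat{S}=s)}{P(S=s)}.
\end{equation*}
Next, I would split each joint event by the value of the other sensitive variable and observe the cancellation of the common term $P((U,V)\in A, S=s, \hat{S}=s)$. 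This leaves the numerator as $P((U,V)\in A, S=s, \hat{S}\neq s) - P((U,V)\in A, S\neq s, \hat{S}=s)$, a difference of two non-negative quantities, hence bounded in absolute value by their maximum. Dropping the event $\{(U,V)\in A\}$ then weakens each summand to the corresponding misclassification mass $P(S=s, \hat{S}\neq s)$ or $P(S\neq s, \hat{S}=s)$.

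The final step invokes the algebraic identity
\begin{equation*}
P(S=s, \hat{S}\neq s) = P(S=s) - P(S=s, \hat{S}=s) = P(\hat{S}=s) - P(S=s, \hat{S}=s) = P(S\neq s, \hat{S}=s),
\end{equation*}
which follows directly from the prior matching assumption. Consequently the two candidate upper bounds coincide, and after dividing by $P(S=s)$ I obtain $|Q^{(s)}(A) - \hat{Q}^{(s)}(A)| \leq P(S\neq\hat{S}\mid S=s)$ uniformly in $A$, yielding the stated TV bound after taking the supremum.

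The main obstacle I anticipate is the bookkeeping step where the signed difference of two non-negative joint probabilities must be bounded by their maximum rather than their sum; the prior matching hypothesis is precisely what makes this maximum equal to a single misclassification rate and avoids a spurious factor of two. The rest is routine manipulation of conditional probabilities.
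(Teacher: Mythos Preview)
Your proposal is correct and follows essentially the same approach as the paper's proof: both arguments condition on the value of the other sensitive variable, cancel the common term involving $\{S=s,\hat{S}=s\}$, and then use the prior-matching hypothesis to collapse the remaining cross terms to the single misclassification rate $P(S\neq\hat{S}\mid S=s)$. The only cosmetic difference is that the paper works throughout with conditional probabilities and factors out $P(\hat{S}\neq S\mid S=s)$ before bounding a difference of conditional probabilities by $1$, whereas you work with joint probabilities and bound $|a-b|\le\max(a,b)$ before dividing; the two bounding steps are equivalent once the prior-matching identity $P(S=s,\hat{S}\neq s)=P(S\neq s,\hat{S}=s)$ is in hand.
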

The proof is provided in Appendix~\ref{sec:appendix}. 
This theorem suggests that, assuming the distribution of the reconstructed sensitive attributes $\hat{S}$ is identical to the true distribution of $S$\footnote{\sth{
If the assumption is violated, a more lenient TV distance upper bound estimation is needed (further discussed in Appendix~\ref{sec:appendix}). In our experiments, as the observed commendable accuracy in reconstructing sensitive attributes leads to a modest disparity between $P(S)$ and $P(\hat{S})$, we approximate the assumption holds true here. 
}}, we can create the ambiguity set $\mathbb{B}(\rho_{s}; \hat{Q}^{(s)})$ as follows:
\begin{equation}
\small
    \label{eq: range}
    \mathbb{B}\left(
        \rho_{s}; \hat{Q}^{(s)}
    \right) = \left\{
        \tilde{Q}^{(s)}| TV(\tilde{Q}^{(s)}, \hat{Q}^{(s)}) \leq \rho_{s}
    \right\}, \rho_{s}=P(S\ne \hat{S}|S=s),
\end{equation}
where  $\rho_{s}$ can be approximately estimated using the validation set\footnote{\zy{
In practice, by assessing the difference between the validation and test sets and incorporating it with the error rate of the validation set, we can set an upper bound on the reconstruction of sensitive attribute errors, thus enabling estimation of $\rho_{s}$.}}, \textcolor{black}{following existing works~\cite{wang2020robust}}. 
This ambiguity set encompasses the true distribution ${Q}^{(s)}$, meaning that we successfully build the target ambiguity set with $\hat{Q}^{(s)}$ and $\rho_{s}$.






\vspace{-2pt}
\subsubsection{Robust Fair Learning}

We next consider solving the optimization problem in Equation~\eqref{eq:constraint-fair-in-dist}.
\zy{Following~\cite{hu2018does,oren2019distributionally}, we convert it into an empirical form (\ie representing it using the empirical distribution), enabling us to optimize it in a data-driven manner. 
Meanwhile, to tackle the challenges posed by the complexity of solving the constrained optimization problem, we transform it into a solvable Lagrangian problem with the algorithm proposed in~\cite{gda}.}


\vspace{3pt}
\noindent \textbf{Empirical form.}
\sth{To solve Equation~\eqref{eq:constraint-fair-in-dist} with learning methods,} the key step involves using $\hat{\mathcal{D}}_m$ to estimate $\mathbb{E}_{(U,V)\sim {\tilde{Q}^{(s)}}}\left [\hat{R}\right]$. Denote the empirical estimation as $\mathbb{E}_{ \tilde{Q}^{(s)},\hat{\mathcal{D}}_{m}} \left [\hat{R}\right]$, we can obtain the empirical form of the optimization problem as follows: 
\begin{equation}
\small
\label{eq:fair-emp-constraint}
\begin{aligned}
&\qquad\qquad\qquad\qquad\underset{\theta}{\text{min}} \ L \left (\theta \right ) \ \\
\text{s.t. } &\eta_{k}^{(s)}\mathbb{E}_{\mathcal{D}_{k}^{(s)}}\left [\hat{R}\right] + \eta_{m}^{(s)}\mathbb{E}_{\tilde{Q}^{(s)},\hat{\mathcal{D}}_m}\left[\hat{R}\right] - \mathbb{E}_{\mathcal{D}}\left [\hat{R} \right] = 0, \\ 
& \qquad\qquad\qquad \forall \tilde{Q}^{(s)}\in \mathbb{B}({\rho_{s}};\hat{Q}^{(s)}) , \ s=0,1,
\end{aligned}
\end{equation}
where  $\hat{Q}^{(s)}$, $\tilde{Q}^{(s)}$  are also converted into empirical distributions~\cite{dekking2005modern}. $\hat{Q}^{(s)}=\{\hat{q}_{u,v}^{(s)}|(u,v,\hat{s}) \in \hat{\mathcal{D}}_{m}\}$,
where $\hat{q}_{u,v}^{(s)}$ denotes the probability weight for the sample $(u,v,\hat{s})\in \hat{\mathcal{D}}_m$ and $\hat{q}_{u,v}^{(s)}=1/|\hat{\mathcal{D}}_{m}^{(s)}|$ if $\hat{s}=s$ else $\hat{q}_{u,v}^{(s)}=0$; similarly, we have $\tilde{Q}^{(s)}=\{\tilde{q}_{u,v}^{(s)}|(u,v,\hat{s}) \in \hat{\mathcal{D}}_{m}\}$; $\mathbb{E}_{\tilde{Q}^{(s)},\hat{\mathcal{D}}_{m}}\left[\hat{R}\right]$ represents the \textit{empirical} expectation of $\hat{R}$ in $\hat{\mathcal{D}}_{m}$ under the distribution $\tilde{Q}^{(s)}$, and $\mathbb{B}\left(
        \rho_{s};\hat{Q}^{(s)}
    \right)$ denotes the empirical form of ambiguity set defined in Equation~\eqref{eq: range}, formally, we have:
    
\begin{itemize} [leftmargin=*]
\item The \textit{empirical} expectation of $\hat{R}$, \ie $\mathbb{E}_{\tilde{Q}^{(s)},\hat{\mathcal{D}}_{m}}\left[\hat{R}\right]$:
\begin{equation}
\small
\label{eq:empirical constraint}\mathbb{E}_{\tilde{Q}^{(s)},\hat{\mathcal{D}}_{m}}\left[\hat{R}\right]=\sum_{(u,v)\in \hat{\mathcal{D}}_{m}} \tilde{q}_{u,v}^{(s)} \cdot\hat{r}_{u,v};
\end{equation}

\item The \textit{empirical} form of ambiguity set $\mathbb{B}\left(
        \rho_{s};\hat{Q}^{(s)}
    \right)$:
    \begin{equation}
    \small
    \begin{aligned}
    \mathbb{B}\left(
        \rho_{s};\hat{Q}^{(s)}
    \right) = \Big\{
        \tilde{Q}^{(s)} \in \mathbb{R}^{|\hat{\mathcal{D}}_{m}|}: \frac{1}{2}\sum_{(u,v)\in \hat{\mathcal{D}}_{m}}\left|\tilde{q}^{(s)}_{u,v} - \hat{q}^{(s)}_{u,v}\right| \leq \rho_{s}, \\
     \sum_{(u,v)\in \hat{\mathcal{D}}_{m}} 
        \tilde{q}^{(s)}_{u,v}=1, \,\tilde{q}_{u,v}^{(s)} \ge 0
    \Big\},
    \end{aligned}
\end{equation}
where $\frac{1}{2}\sum_{(u,v)\in \hat{\mathcal{D}}_{m}}\left|\tilde{q}^{(s)}_{u,v} - \hat{q}^{(s)}_{u,v}\right| \leq \rho_{s}$  represents the empirical implementation of the constraint $TV({\tilde{Q}}^{(s)}$
$, {\hat{Q}}^{(s)}) \leq \rho_{s}$ (Equation~\eqref{eq:tv-distance-probability} in Appendix~\ref{sec:appendix}), 
$\sum_{(u,v)\in \hat{\mathcal{D}}_{m}} 
        \tilde{q}^{(s)}_{u,v}=1$ and $\tilde{q}_{u,v}^{(s)} \ge 0$ are used to ensure that 
the empirical distribution $\tilde{Q}^{(s)}$ represents a valid probability distribution. 

\end{itemize}

\begin{algorithm}[t]
	\caption{DRFO}
 \label{alg:DRFO}
	\LinesNumbered
	\KwIn{Dataset with known sensitive attributes $\mathcal{D}_{k}$, dataset with unkown sensitive attributes $\mathcal{D}_{m}$. Hyper-parameters $\lambda_{s}$, and learning rate $\alpha_{\theta}$ for $\theta$, learning rate $\alpha_{q}$ for $\tilde{Q}^{(s)}$ ($s=0,1$). }
 Random split $\mathcal{D}_{k}$ into training and validation sets, and train a sensitive attribute classifier\;


 Reconstruct sensitive attributes $\hat{S}$ for $\mathcal{D}_{m}$, getting $\hat{\mathcal{D}}_{m}$, and for each sample $(u,v)\in \hat{\mathcal{D}}_{m}$, compute $\hat{q}^{(s)}_{u,v}$  (=$1/|\hat{\mathcal{D}}_{m}|$ if $\hat{S}=s$ else 0), obtaining $\hat{Q}^{(s)}=\{ \hat{q}^{(s)}_{u,v} | (u,v) \in \hat{\mathcal{D}}_{m} \}$ \;



 Estimate $\rho_{s}\approx  P(S\ne \hat{S}|S=s)$ with the validation set held out by $\mathcal{D}_{k}$\;
 
     
     \For{$t=1,\dots,T$}{
     Update $\theta$ using gradient descent: $\theta \leftarrow \theta - \alpha_{\theta} \nabla_{\theta}{L\left (\theta \right )} - \sum_{s=0}^{1}\lambda_{s}\alpha_{\theta}  \nabla_{\theta} L_{s}( \theta,\tilde{Q}^{(s)})$\;
\For{$s=0,1$}{
Update $\tilde{Q}^{(s)}$ using gradient ascent:
$\tilde{Q}^{(s)} \leftarrow \tilde{Q}^{(s)} + \sum\limits_{s=0}^{1}\lambda_{s} \alpha_{{q}} \nabla_{\tilde{Q}^{(s)}} 
L_{s}( \theta,\tilde{Q}^{(s)})$ \;

Project $\tilde{Q}^{(s)}$ onto $\mathbb{B}(\rho_{s};\hat{Q}^{(s)})$\;
} 
}
\end{algorithm}
\setlength{\textfloatsep}{0.1cm}

\vspace{3pt}
\noindent \textbf{Empirical Lagrangian form.} 
The optimization problem in Equation~\eqref{eq:fair-emp-constraint} is difficult to solve because it involves fairness constraints for all potential distributions $\tilde{Q}^{(s)}$ in $\mathbb{B}(\rho_{s};\hat{Q}^{(s)})$. To overcome this challenge, we convert the fairness constraints into regularizers and optimize the worst-case unfairness within $\mathbb{B}(\rho_{s};\hat{Q}^{(s)})$ to ensure fairness for the entire set~\cite{dro-rawlfair}. Finally, we reformulate Equation~\eqref{eq:fair-emp-constraint} as a min-max optimization problem, given by:

\begin{equation}
\small
\label{eq:emp-lag-form}
\underset{\theta}{\text{min}} \max\limits_{\substack{\tilde{Q}^{(s)} \in \mathbb{B}({\rho_{s}};\hat{Q}^{(s)}),~s=0,1}} L\left (\theta \right ) + \sum_{s}\lambda_{s} L_{s}( \theta,\tilde{Q}^{(s)}),
\end{equation}
where $\lambda_{s}$ is the hyper-parameter to control the strength of the fairness regularizer $L_{s}(\theta,\tilde{Q}^{(s)})$ obtained via Lagrangian trick: 
\begin{equation}
    \small
    \vspace{2pt}
    \label{eq:l_s}
    L_{s}(\theta,\tilde{Q}^{(s)})= \left| \eta_{k}^{(s)}\mathbb{E}_{\mathcal{D}_{k}^{(s)}}\left [\hat{R}\right]  + \eta_{m}^{(s)}\mathbb{E}_{\tilde{Q}^{(s)},\hat{\mathcal{D}}_{m}}\left[\hat{R}\right]
    - \mathbb{E}_{\mathcal{D}}\left [\hat{R}\right]\right|.
    \vspace{2pt}
\end{equation}
Here, $\tilde{Q}^{(s)}$ becomes learnable parameters. We update it in $\mathbb{B}(\rho_{s};\hat{Q}^{(s)})$ via `max' optimization to find the worst case for fairness.

\vspace{3pt}
\noindent \textbf{Learning algorithm}. 
To solve the optimization problem in Equation~\eqref{eq:emp-lag-form}, we use the algorithm in~\cite{gda} for solving DRO, which is summarized in Algorithm~\ref{alg:DRFO}. At each iteration of the algorithm, we first update the model parameters $\theta$ to minimize the total loss in Equation~\eqref{eq:emp-lag-form} (line 5). Then, for each sensitive attribute value $s \in \{0,1 \}$, we update $\tilde{Q}^{(s)}$ to maximize the fairness-related loss $L_{s}(\theta,\tilde{Q}^{(s)})$ in Equation~\eqref{eq:emp-lag-form} (line 7), and project the updated result onto $\mathbb{B}(\rho_{s};\hat{Q}^{(s)})$ (line 8, \textcolor{black}{avoiding $\tilde{Q}^{(s)}$ beyond the set}) to obtain the final $\tilde{Q}^{(s)}$ for that iteration.  

\subsection{Discussion}
\label{sec:discussion}

We now discuss the extension of DRFO for situations where certain users are not willing to reconstruct their sensitive attributes. 
The main challenge lies in the inability to reconstruct sensitive attributes, which is essential for building the ambiguity set in DRFO.
However, even with a random distribution, we can build a broad ambiguity set that covers the true distribution of these users by using a large robust radius.
Thereafter, by optimizing the worst-case unfairness with the broad ambiguity set, we could still achieve fairness. 
Let $\mathcal{D}_{b}$ denote the interaction data of such users.
To implement this approach, we first \textit{randomly} assign sensitive attributes for users in $\mathcal{D}_{b}$, serving as reconstructed sensitive attribute $\hat{S}$ in DRFO and obtaining $\hat{\mathcal{D}}_{b}$. Then, we define empirical $\hat{Q}_{b}^{(s)}$, $\tilde{Q}_{b}^{(s)}$, and $\mathbb{B}(\rho_{s}^{\prime};\hat{Q}_{b}^{(s)})$ for $\hat{\mathcal{D}}_{b}$, referring to those of $\hat{\mathcal{D}}_{m}$. To build the broad ambiguity set, we set the robust radius $\rho_{s}^{\prime}$ to 1.
For users with missing sensitive attributes and permitting reconstruction, we use $\mathcal{D}_{r}$ to denote their interactions ($\mathcal{D}_{r} \cup \mathcal{D}_{b} = \mathcal{D}_{m}$, $\mathcal{D}_{r} \cap  \mathcal{D}_{b} = \emptyset$). Then we define $\mathcal{\hat{D}}_{r}$, $\hat{Q}^{(s)}_{r}$, $\tilde{Q}^{(s)}_{r}$, and $\mathbb{B}(\rho_{s};\hat{Q}_{r}^{(s)})$ for $\mathcal{D}_{r}$ in the same way as that of $D_{m}$ described in Section~\ref{sec:DRFO}.
Finally, we could optimize the fairness for $\hat{\mathcal{D}}_{b}$ and $\hat{\mathcal{D}}_{r}$ in a similar way to $\hat{\mathcal{D}}_{m}$ using DRO. Adding such optimization parts into Equation~\eqref{eq:fair-emp-constraint}, we obtain the final optimization problem:
\begin{equation}
\small
\label{eq:fair-emp-constraint-refuse-users}
\begin{aligned}
&\underset{\theta}{\text{min}} \max\limits_{\substack{\tilde{Q}^{(s)}_{r} \in \mathbb{B}(\rho_{s};\hat{Q}^{(s)}_{r})\\
\tilde{Q}_{b}^{(s)} \in \mathbb{B}({\rho_{s}^{\prime}}; \hat{Q}_{b}^{(s)})\\s=0,1}} 
L\left (\theta \right ) +   \sum_{s=0}^{1}\lambda_{s} \bigg| \eta_{k}^{(s)}\mathbb{E}_{\mathcal{D}_{k}^{(s)}}\left [\hat{R}\right] \\ & + 
\eta_{r}^{(s)}\mathbb{E}_{\tilde{Q}^{(s)}_{r},\hat{\mathcal{D}}_{r}}\left[\hat{R}\right] + \eta_{b}^{(s)}\mathbb{E}_{\tilde{Q}^{(s)}_{b}, \hat{\mathcal{D}}_{b}}\left[\hat{R}\right] - \mathbb{E}_{\mathcal{D}}\left [\hat{R}\right]\bigg| , 
\end{aligned}
\vspace{10pt}
\end{equation}
where $\mathbb{E}_{\tilde{Q}^{(s)}_{r},\hat{\mathcal{D}}_{r}}\left[\hat{R}\right]$ and $\mathbb{E}_{\tilde{Q}^{(s)}_{b},\hat{\mathcal{D}}_{b}}\left[\hat{R}\right]$ represents the empirical expectation of $\hat{R}$ under the distribution $\tilde{Q}^{(s)}_{r}$ and $\tilde{Q}^{(s)}_{b}$, respectively, computed similarly to Equation~\eqref{eq:empirical constraint}, and
$
\eta_{k}^{(s)}=\frac{ |\mathcal{D}^{(s)}_{k}|}{|\mathcal{D}_{k}^{(s)}| + |\hat{\mathcal{D}}_{r}^{(s)}| + |\hat{\mathcal{D}}_{b}^{(s)}|},$
$\eta_{r}^{(s)}=\frac{ |\hat{\mathcal{D}}_{r}^{(s)}|}{ |\mathcal{D}_{k}^{(s)} |+ |\hat{\mathcal{D}}_{r}^{(s)}| + |\hat{\mathcal{D}}_{b}^{(s)}|},$ and $ \eta_{b}^{(s)}=\frac{|\hat{\mathcal{D}}_{b}^{(s)}|}{ |\mathcal{D}_{k}^{(s)}| +  |\hat{\mathcal{D}}_{r}^{(s)}| + |\hat{\mathcal{D}}_{b}^{(s)}|}.
$
The learning algorithm can be formulated following the structure outlined in Algorithm~\ref{alg:DRFO}.

\section{Experiments}
In this section, we conduct extensive experiments to answer the following research questions:
\textbf{RQ1:} Can DRFO provide fair recommendations with limited sensitive attributes? \textbf{RQ2:} How do reconstruction errors affect the effectiveness of DRFO in ensuring fairness? What is the performance of DRFO in terms of fairness on both users with known sensitive attributes and those with unknown sensitive attributes? \textbf{RQ3:} Can DRFO ensure fairness when some users prohibit the reconstruction of sensitive attributes?

\begin{table}[t]
\centering
\caption{Statistics of the evaluation datasets.
}
\vspace{-4pt}
\label{tab:statistics}
\renewcommand\arraystretch{0.9}
	\setlength{\tabcolsep}{1.2mm}{
		\resizebox{0.45\textwidth}{!}{
  \small
\begin{tabular}{c|ccccccc}
\hline
\multirow{2}{*}{Dataset} &  \multirow{2}{*}{\#Items} & \multicolumn{2}{c}{\#Users} & \multicolumn{2}{c}{\#Interactions} & \multicolumn{2}{c}{Mean Ratings} \\ \cline{3-8} 
                         &         & $S=0$        & $S=1$        & $S=0$          & $S=1$          & $S=0$            & $S=1$           \\ \hline
ML-1M                    & 3,244   & 1,153        & 3,144        & 228,191        & 713,590        & 0.5866           & 0.5661          \\ \hline
Tenrec                   & 14,539  & 3,108        & 2,299        & 308,217        & 337,958        & 0.4849           & 0.4676          \\ \hline
\end{tabular}
}}
\end{table}
\vspace{-2pt}
\subsection{Experimental Settings}
\subsubsection{Datasets} We conduct experiments on two open benchmark datasets: MovieLens 1M\footnote{\url{https://grouplens.org/datasets/movielens/}}~\cite{harper2015movielens} and Tenrec\footnote{\url{https://static.qblv.qq.com/qblv/h5/algo-frontend/tenrec_dataset.html}}~\cite{yuan2022tenrec}.
\begin{itemize}[leftmargin=*]
    \item \textbf{MoviveLens 1M (ML-1M)}: 
    This is a movie recommendation dataset including user ratings of movies on a scale of $1\text{-}5$ and sensitive user attributes such as `gender'. We select `gender' as the sensitive attribute and transform the rating values into binary labels using a threshold of $3$, where ratings greater than three are labeled as positive (`1') and the rest as negative (`0').
    
    \item \textbf{Tenrec}: 
    This dataset contains user feedback on articles or videos alongside anonymized user information from Tencent's recommendation platforms. 
    We focus on the `QB-video' sub-dataset for our experiments, using the `like' feedback as the rating label and considering `gender' as the sensitive attribute.


\end{itemize}

In this work, we adopt the approach proposed in~\cite{weinsberg2012blurme} to reconstruct sensitive attributes only using user historical interactions. To ensure the effectiveness of the reconstruction, it is crucial for users to have a sufficient number of historical interactions in the two datasets. Therefore, we apply 50-core filtering~\cite{50core, 50-core-2} to select users with more than 50 interactions while performing 10-core filtering for items. Table~\ref{tab:statistics} presents the statistical details of the processed datasets after filtering. We partition the datasets into training, validation, and testing sets using a ratio of 0.7:0.15:0.15.

\begin{table*}[t]
\renewcommand\arraystretch{0.9}
\caption{
Comparison between baselines and DRFO on ML-1M and Tenrec \wrt the fairness metric DP and recommendation performance metric RMSE under varying known sensitive attribute ratios. 
Lower DP and RMSE indicate better results.
}
\vspace{-0.3cm}
\label{tab:overall}
\setlength{\tabcolsep}{1.2mm}{
		\resizebox{0.8\textwidth}{!}{
			\small
\begin{tabular}{lcccccc|ccccc}
\hline
\multirow{2}{*}{Dataset} & \multirow{2}{*}{Model} & \multicolumn{5}{c|}{DP $\downarrow$}                                                   & \multicolumn{5}{c}{RMSE $\downarrow$}                          \\ \cline{3-12} 
                         &                        & 10\%            & 30\%            & 50\%            & 70\%            & 90\%            & 10\%   & 30\%   & 50\%                       & 70\%   & 90\%   \\ \hline
\multirow{6}{*}{ML-1M}   & Basic MF               & 0.0225          & 0.0225          & 0.0225          & 0.0225          & 0.0225          & 0.4147 & 0.4147 & 0.4147                     & 0.4147 & 0.4147 \\
                         & Oracle                 & 0.0009          & 0.0009          & 0.0009          & 0.0009          & 0.0009          & 0.4148 & 0.4148 & 0.4148                     & 0.4148 & 0.4148 \\ \cline{2-12} 
                         & RegK                   & 0.0159          & 0.0132          & 0.0058          & 0.0014          & 0.0013          & 0.4144 & 0.4147 & 0.4147                     & 0.4150 & 0.4149 \\
                         & FLrSA                   & 0.0095          & 0.0119          & 0.0038          & 0.0035          & \textbf{0.0003} & 0.4147 & 0.4147 & 0.4148                     & 0.4151 & 0.4147 \\
                         & CGL                    & 0.0082          & 0.0145          & 0.0056          & 0.0035          & 0.0005          & 0.4147 & 0.4144 & 0.4148                     & 0.4150 & 0.4149 \\
                         & DRFO                & \textbf{0.0034} & \textbf{0.0042} & \textbf{0.0011} & \textbf{0.0013} & 0.0006          & 0.4201 & 0.4201 & 0.4191                     & 0.4223 & 0.4198 \\ \hline
\multirow{6}{*}{Tenrec}  & Basic MF               & 0.0068          & 0.0068          & 0.0068          & 0.0068          & 0.0068          & 0.4503 & 0.4503 & 0.4503                     & 0.4503 & 0.4503 \\
                         & Oracle                 & 0.0001          & 0.0001          & 0.0001          & 0.0001          & 0.0001          & 0.4504 & 0.4504 & 0.4504                     & 0.4504 & 0.4504 \\ \cline{2-12} 
                         & RegK                   & 0.0053          & 0.0052          & 0.0053          & 0.0040          & 0.0051          & 0.4558 & 0.4520 & 0.4530                     & 0.4500 & 0.4500 \\
                         & FLrSA                   & 0.0073          & 0.0073          & 0.0073          & 0.0032          & 0.0013          & 0.4503 & 0.4503 & 0.4503                     & 0.4597 & 0.4594 \\
                         & CGL                    & 0.0073          & 0.0073          & 0.0074          & 0.0029          & 0.0009          & 0.4503 & 0.4503 & 0.4503                     & 0.4518 & 0.4596 \\
                         & DRFO                & \textbf{0.0019} & \textbf{0.0018} & \textbf{0.0013} & \textbf{0.0018} & \textbf{0.0009} & 0.4590 & 0.4575 & \multicolumn{1}{l}{0.4565} & 0.4595 & 0.4596 \\ \hline
\end{tabular}
}}
\vspace{-0.3cm}
\end{table*}

\subsubsection{Compared Methods} 
Achieving fair recommendations with limited sensitive attributes is a novel research problem, and no existing method is specifically designed to address it in recommender systems (to our knowledge). To establish baselines, we select two intuitive methods (RegK and FLrSA) and a method (CGL) proposed in other fields for solving this problem.


\begin{itemize}[leftmargin=*] 

\item[--] \textbf{RegK}~\cite{yao2017beyond} is a regularization-based approach that applies fair regularization only to users with known sensitive attributes. 

\item[--] \textbf{FLrSA}~\cite{yao2017beyond} first reconstructs missing sensitive attributes and then applies the fairness regularization method using both the reconstructed and the known sensitive attributes. 

\item[--] \textbf{CGL}~\cite{CGL} is a representative method for fairness with limited sensitive attributes in computer vision. It reconstructs sensitive attributes and addresses low-confidence reconstructions via random replacement. Subsequently, it employs a regularization-based approach using the obtained attributes. We adapt it to recommender systems.


 
\end{itemize}
Although RegK and FLrSA are intuitively designed, similar methods have been employed in other fields to study fairness with limited sensitive attributes~\cite{CGL}. 
Additionally, we include two fairness-unawareness methods as reference points.
\begin{itemize}[leftmargin=*] 

\item[--] \textbf{Basic MF}~\cite{koren2009matrix} refers to the basic Matrix Factorization (MF) model that is trained without addressing fairness issues.

\item[--] \textbf{Oracle}~\cite{yao2017beyond} assumes knowledge of all users' sensitive attributes and employs the regularization-based method to achieve fair recommendations. Its results serve as an upper bound on fairness.
\end{itemize}
To ensure a fair comparison, we implement the fairness regularization term based on Equation~\eqref{eq:mf-fair-reg} to pursue demographic parity fairness, using the classical Matrix Factorization model as the backbone recommendation model for all methods.


\vspace{-2pt}
\subsubsection{Evaluation Protocol.} 
In our experiments, we simulate scenarios involving unknown sensitive attributes by randomly retaining the sensitive attribute for a subset of users while masking it for others. The retention ratio,  varying in \{0.1, 0.3, 0.5, 0.7, 0.9\}, determines the proportion of users whose sensitive attributes are preserved during both training and validation phases. During testing, however, the sensitive attributes of all users are accessible to evaluate fairness. We use the MAD metric in Equation~\eqref{eq: MAD} to measure fairness (DP) and the root mean squared error (RMSE) to measure recommendation performance. A smaller value for both metrics indicates better performance in terms of fairness or recommendation.

\vspace{-2pt}
\subsubsection{Implementation Details} 
For a fair comparison, we optimize all models using the Adam optimizer~\cite{adam} with the default embedding size of 32. Before applying the fairness methods, we pre-train the MF model using grid search to determine the optimal learning rate from the range of $\{1e^{-2}, 1e^{-3} \}$ and the best weight decay from the range of $\{1e^{-1},1e^{-2},...,1e^{-7}\}$. For the fairness models, we initialize their backbone MF model with the pre-trained one and then fine-tune them with a fixed learning rate of $1e^{-3}$. 
We tune the fairness regularization coefficient in the range of $\{0.01, 0.05, 0.1, 0.5, 1, 5, 10\}$ for the baselines. 
For DRFO, we set the hyper-parameter that controls the strength of fairness constraints to $10$ for ML-1M and $1$ for Tenrec, and set the learning rate $\alpha_q$ for updating $\Tilde{Q}^{(s)}$ to $1e^{-3}$ 
for ML-1M and $1e^{-2}$ for Tenrec. Additionally, for methods that involve reconstructing unknown sensitive attributes, we use the same classifier proposed in~\cite{weinsberg2012blurme} that is trained to fit known sensitive attributes using historical user interactions. 
To ensure a fair comparison, we select hyper-parameters that achieve the best fairness (measured by DP) while also maintaining at least 98\% of the best recommendation performance (measured by RMSE) achieved by the basic MF on the validation set. This approach is commonly used in previous studies to balance the trade-off between fairness and recommendation performance~\cite{recfair-survey, evaluation}.
We release our code at: \url{https://github.com/TianhaoShi2001/DRFO}.
\begin{figure}[t]
    \centering 
    \includegraphics[width=0.47\textwidth]{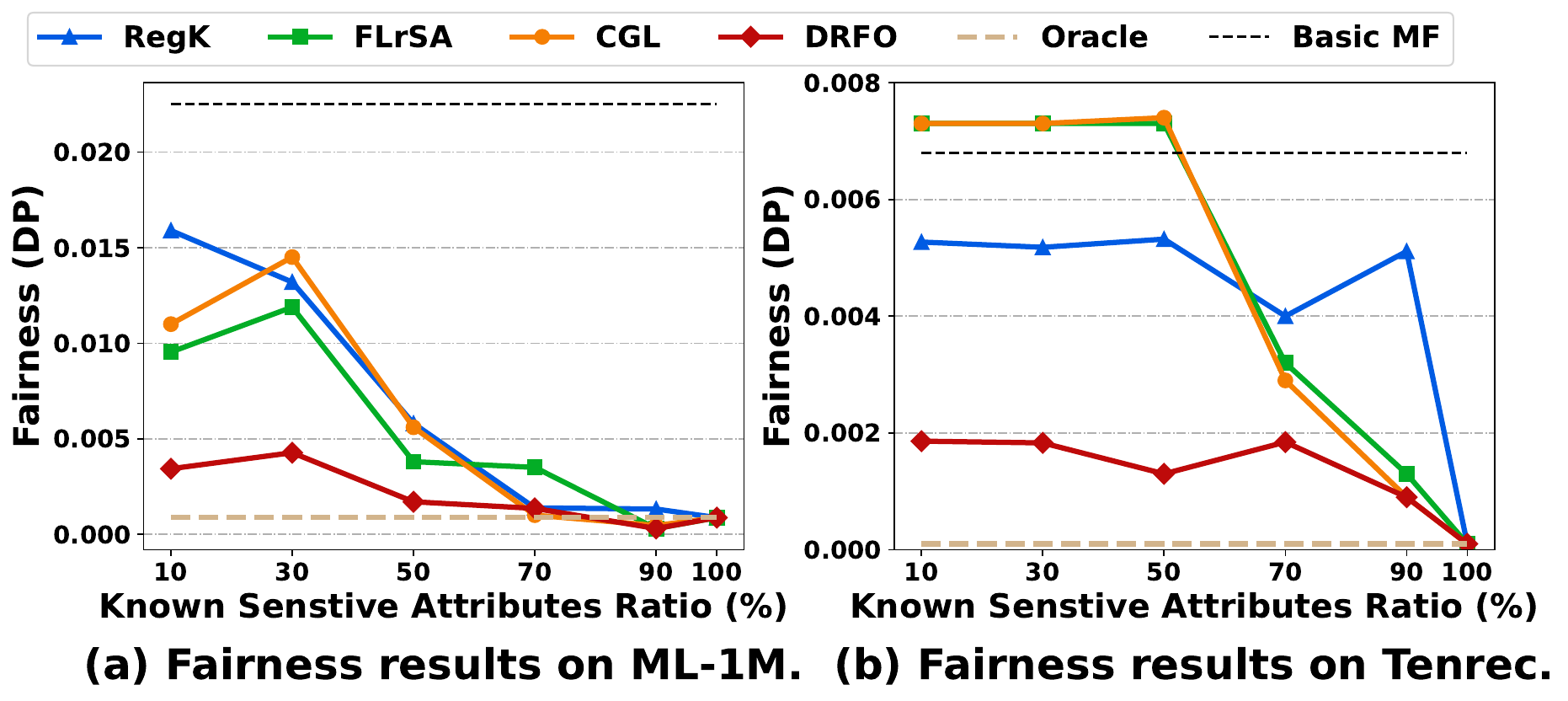}
    \vspace{-0.3cm}
    \caption{Fairness comparison between baselines and DRFO on ML-1M and Tenrec for varying known sensitive attribute ratios. Lower DP values indicate better fairness.
    }
        \label{fig:overall}
    
\end{figure}

\vspace{-2pt}
\subsection{Overall Performance (RQ1)}


We first evaluate methods when all users' sensitive attributes are allowed to be reconstructed. We summarize the results in Table~\ref{tab:overall} for varying proportions of users with known sensitive attributes. 
The corresponding fairness curves for each method are shown in Figure~\ref{fig:overall} to illustrate the impact of the retention ratio.
From the table and figure, we have drawn the following observations:




\begin{itemize}[leftmargin=*]
    \item \sth{Basic MF method exhibits notably inferior fairness compared to regularization-based methods (\eg FLrSA) with a substantial portion ($\ge 90 \% $) of known sensitive attributes, particularly when compared to the Oracle model. This emphasizes the effectiveness of regularization-based approaches in enhancing recommendation fairness.}
    However, \sth{as the proportion of known sensitive attributes decreases ($\leq$50\%), the fairness performance of regularization-based methods rapidly declines, highlighting the importance of addressing limited sensitive attribute issues for improving fairness.}    
    

    \item 
    When the proportion of known sensitive attributes is small ($\leq$50\%), FLrSA successfully outperforms RegK on ML-1M but fails on Tenrec in fairness. This can be attributed to the fact that reconstructing sensitive attributes for Tenrec is more difficult\footnote{It is supported by the lower reconstruction accuracy (AUC) observed on Tenrec.}, thus suffering more reconstruction errors and invalidating the vanilla reconstruction-based method FLrSA. These findings affirm the significant impact of reconstruction errors on the efficacy of reconstruction-based approaches to improve fairness.

    \item 
    \sth{
    Despite taking reconstruction errors into consideration, CGL performs similarly to FLrSA in fairness.
    } This may be due to the fact that CGL randomly assigns sensitive attributes to samples with low reconstruction confidence, 
    which does not \sth{ensure the resolution of the reconstruction error problem.}


   \item DRFO consistently achieves a higher level of fairness, compared to all baselines except for Oracle, even when the proportion of users with known sensitive attributes is low. This confirms its effectiveness in addressing reconstruction errors to achieve fairer recommendations. The superiority of the approach can be attributed to its DRO-based fair optimization, which minimizes the worst-case unfairness over a distribution set to achieve fairness for the true sensitive attributes contained in the set.

\item 
\sth{DRFO achieves slightly lower recommendation performance than baselines due to its aggressive fairness pursuit by minimizing worst-case unfairness across potential distributions.}
Nevertheless, our early stopping strategy selects the fairest model within a maximum 2\% drop in RMSE on the validation, ensuring a low drop (around 2\%) in RMSE on the testing for all methods.
Despite this modest sacrifice in recommendation performance, 
DRFO improves fairness by over 100\% in many cases, 
particularly when numerous users lack sensitive attributes.
\end{itemize}

\begin{figure}[t]
     \centering  
     \subfigure[\textbf{Fairness results on ML-1M.}]{
     \begin{minipage}[b]{0.475\textwidth}
     \centering
\includegraphics[scale=0.28]{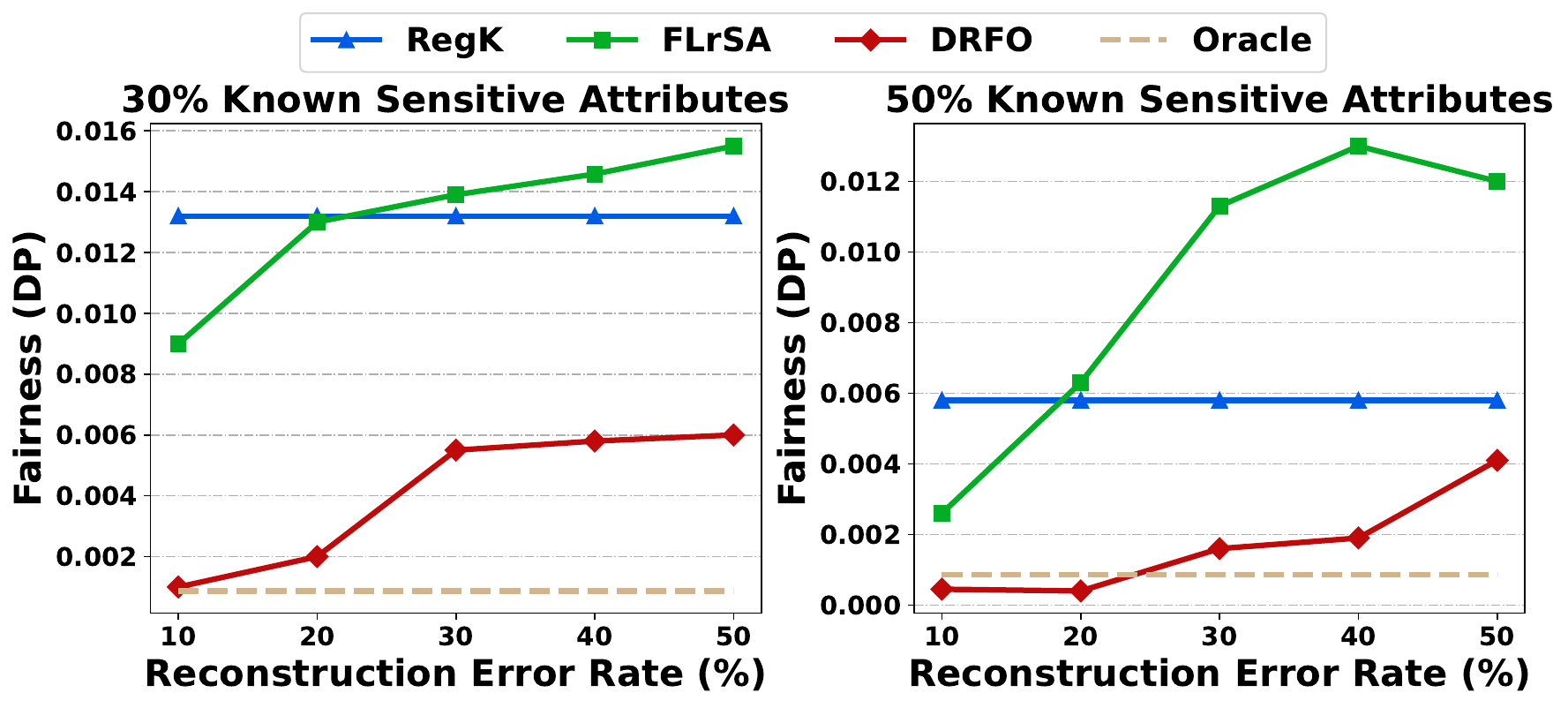}
\vspace{-10pt} 
\end{minipage}
}

\vspace{-5pt} 
\subfigure[\textbf{Fairness results on Tenrec.}]
 {
\begin{minipage}[b]{0.475\textwidth}
\centering
\includegraphics[scale=0.28]{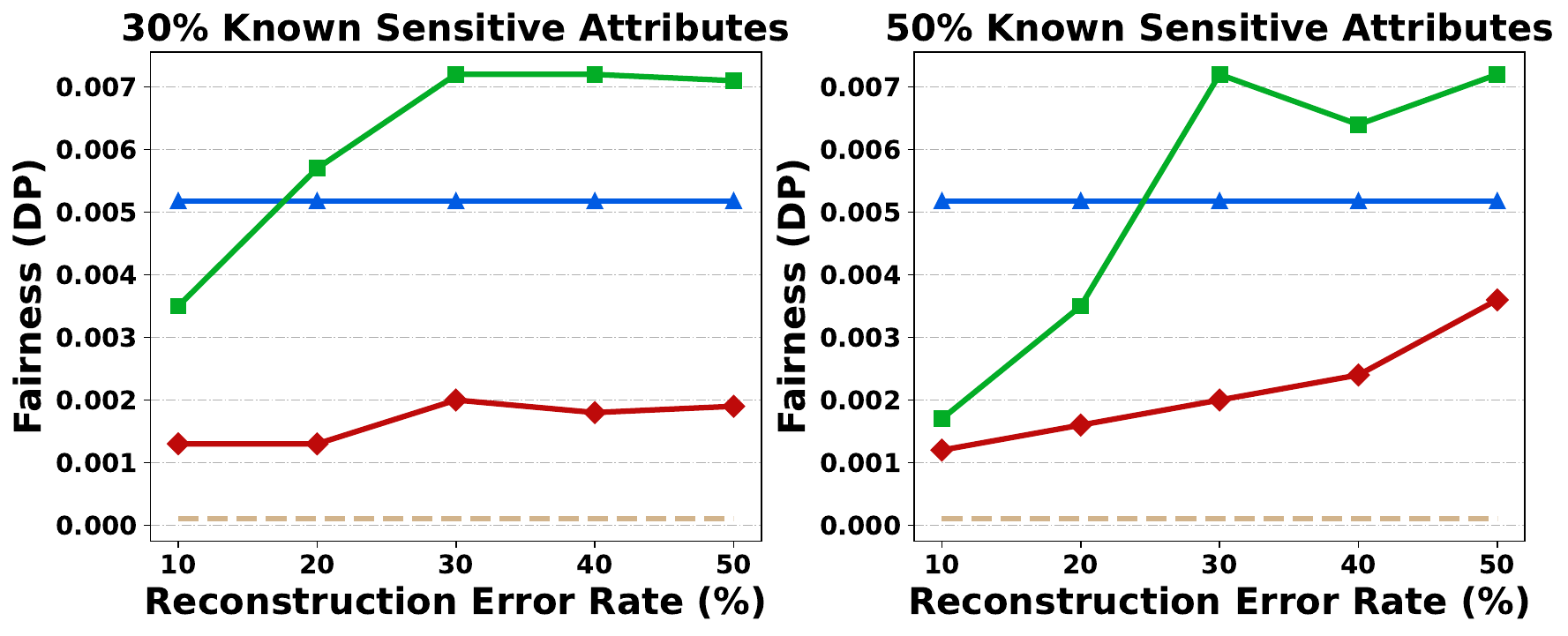}
\vspace{-10pt} 
\end{minipage}
}
     \vspace{-10pt} 
     \caption{
     Fairness performance under different levels of reconstruction errors for sensitive attributes.
     }
      \label{fig:reconstruction-noise}
 \end{figure}


\begin{figure}[t]
     \centering 
{
\includegraphics[height = 3.75cm,width=0.47\textwidth]{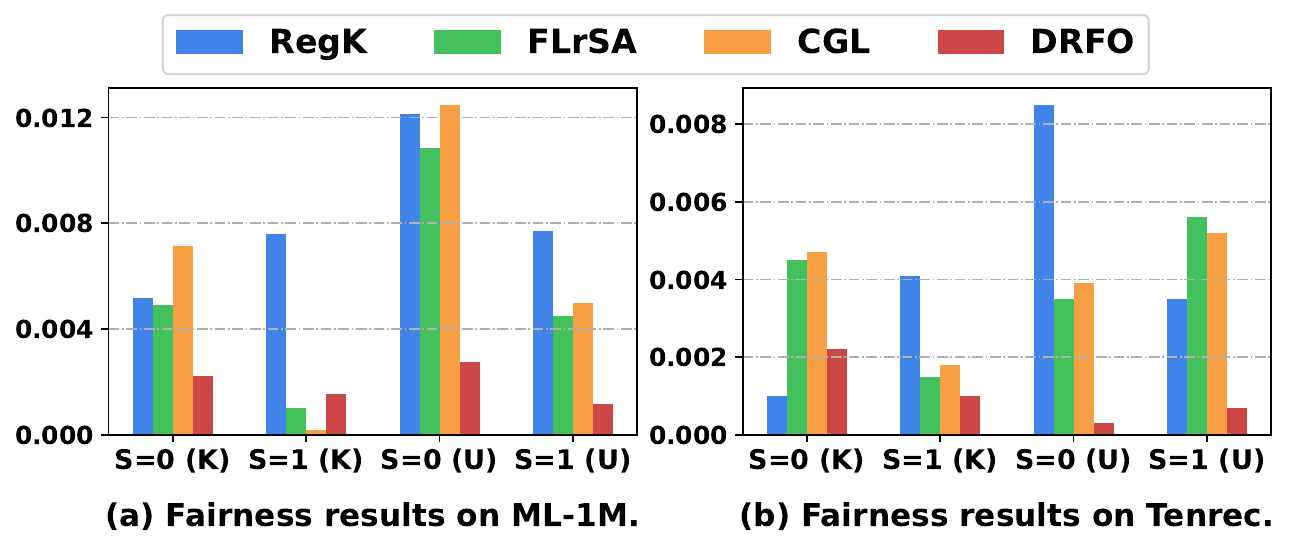}
}  
\vspace{-15pt}
     \caption{
     Absolute difference of average predicted scores of different groups from global average predictions. Higher difference means more unfairness. `K' stands for `known', and `U' stands for `unknown'. `S=0 (K)' denotes the users with the known sensitive attribute of 0, similarly for others.     
     }
     \label{fig:case-study}
 \end{figure}
 \subsection{In-depth Analyses (RQ2)}
We will conduct further experiments to analyze the influence of reconstruction errors on different methods and study the fairness performance of different users.
 \subsubsection{The Effect of Reconstruction Errors}
\sth{In this subsection, we directly investigate the impact of reconstruction errors on the fairness achievement of DRFO and baselines.
We set the ratio of users with known sensitive attributes to be either 30\% or 50\%, and for the remaining users, we randomly flip the true sensitive attributes of some users to form the reconstructed sensitive attributes, thus introducing reconstruction errors.
The flipping ratio varies from 10\% to 50\%, allowing us to simulate different levels of reconstruction errors\footnote{A 50\% reconstruction error is significant, akin to random guessing, and we flip attributes for males and females at a 1:1 ratio to maintain unchanged gender distributions.}.
}
We compare the fairness of RegK, FLrSA, and DRFO under different reconstruction error levels, excluding CGL due to the lack of reconstruction confidence, which is a necessary condition for CGL. The resulting fairness comparison is presented in Figure~\ref{fig:reconstruction-noise}, where we exclude the recommendation performance as the observed drop is limited to approximately 2\% (we also omit these results in the following experiments for the same reasons). Based on the figure, we make the following observations:

 \begin{itemize}[leftmargin=*]
     \item
     \sth{As the reconstruction error increases, both DRFO and FLrSA encounter a decline in fairness.}
     Nevertheless, DRFO's smoother curve and consistent superiority over RegK highlight its robustness to reconstruction errors.
     The deterioration in DRFO's fairness can be explained by Equation~\eqref{eq: range}, where increases in reconstruction error cause DRFO to select a larger ambiguity set, \sth{ intensifying optimization challenges for achieving fairness.}

     \item \sth{Upon surpassing a 20\% reconstruction error, FLrSA no longer surpasses RegK in fairness. Even with just a 10\% reconstruction error, FLrSA fails to match Oracle in fairness. These results emphasize the necessity of addressing reconstruction errors to uphold the reconstruction method's validity.}
 \end{itemize}
\subsubsection{Case Study}
We next investigate whether our DRFO ensures fairness for both users with known sensitive attributes and users with unknown sensitive attributes. 
To achieve this, we propose a metric that measures the absolute difference between a specific group's average predicted rating and the overall average rating, similar to MAD in Equation~\eqref{eq: MAD}. Larger values of this metric indicate more unfairness.
We compute this metric for four different groups: users with $S=1$ but $S$ is unknown, users with $S=0$ but $S$ is unknown, users with known $S=1$, and users with known $S=0$. We summarize the result of the compared method in Figure~\ref{fig:case-study}, where 30\% of users have known sensitive attributes. 



\sth{In Figure~\ref{fig:case-study}, baselines reveal significant absolute differences between 
groups' average and the global average, 
particularly noticeable for those with unknown sensitive attributes, implying higher unfairness.}
These results highlight baselines' inefficacy in achieving fair recommendations, especially for users with unknown attributes.
\sth{In contrast, our proposed DRFO approach exhibits a small unfairness for all groups.}
Furthermore, the unfairness of the groups with known and unknown sensitive attributes is comparable, indicating that our method can successfully overcome the challenge of limited sensitive attributes and provide fair recommendations for users both with known and unknown sensitive attributes.

\begin{figure}[t]
     \centering  
     \subfigure[\textbf{Fairness results on ML-1M.}]{
     \begin{minipage}[b]{0.475\textwidth}
     \centering
\includegraphics[scale=0.28]{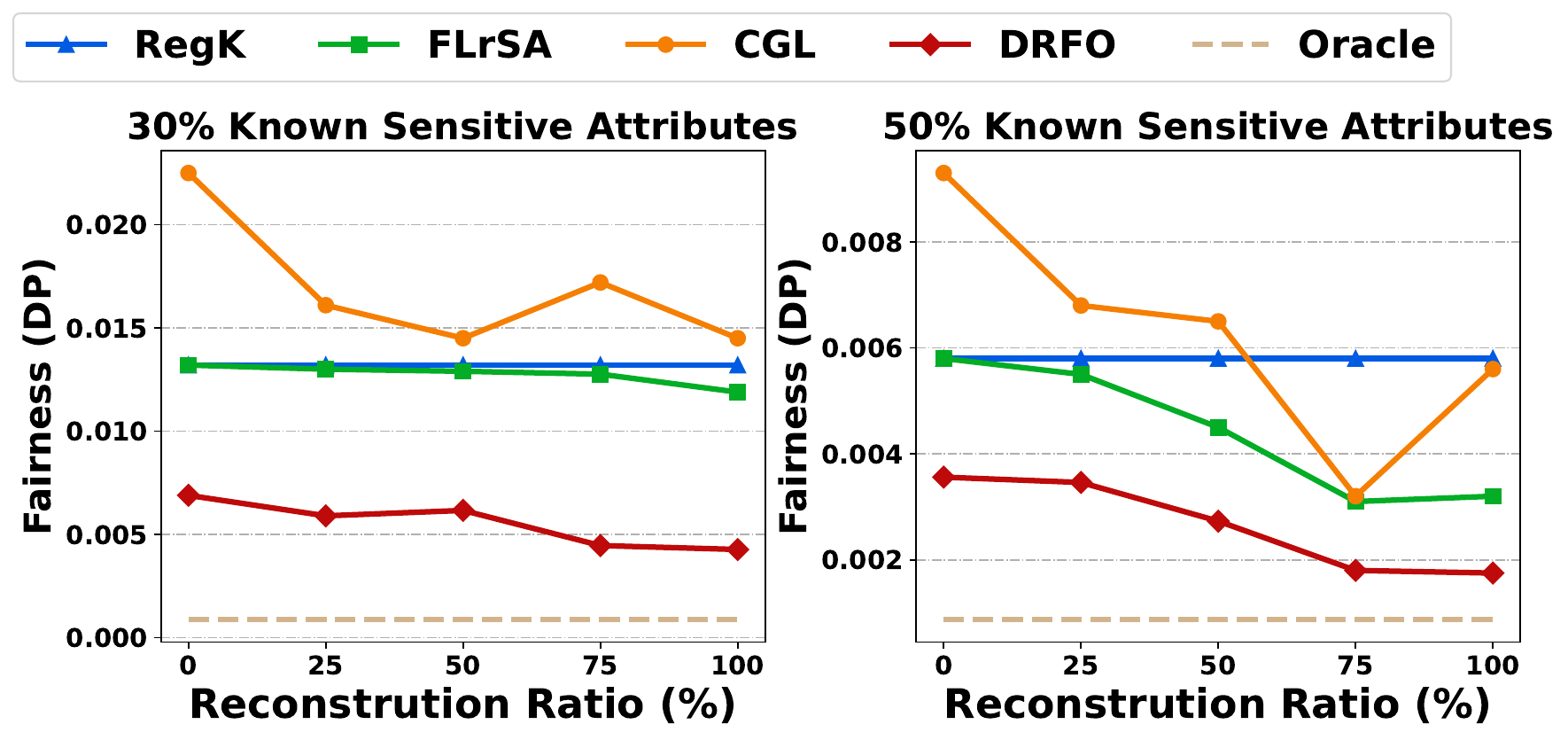}
\vspace{-10pt} 
\end{minipage}
}

\vspace{-5pt} 
\subfigure[\textbf{Fairness results on Tenrec.}]
 {
\begin{minipage}[b]{0.475\textwidth}
\centering
\includegraphics[scale=0.28]{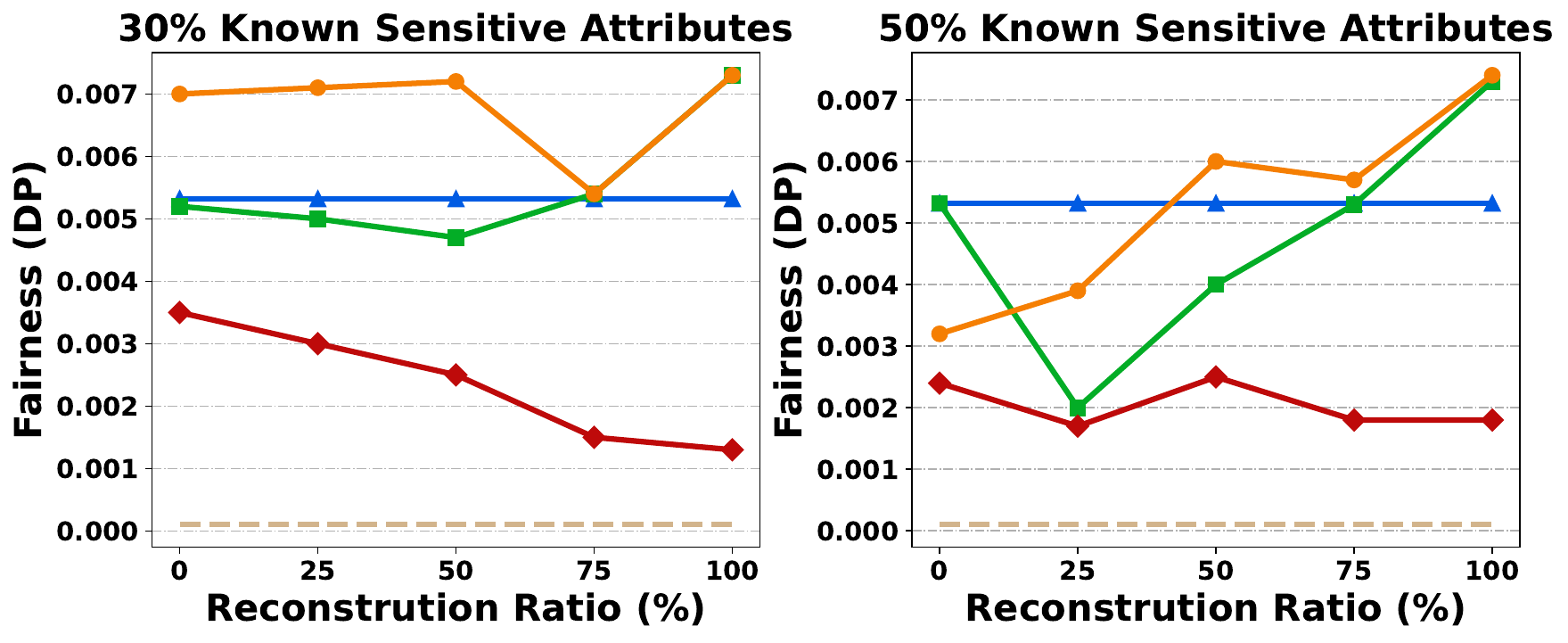}
\vspace{-5pt} 
\end{minipage}
}
     \vspace{-10pt} 
     \caption{
     Fairness results in scenarios where a portion of users does not allow reconstruction of their attributes among the users with unknown sensitive attributes.
     }
      \label{fig:refuse-reconstruction}
 \end{figure}
\vspace{-4pt}
\subsection{Fairness for Users Refusing to Reconstruct Sensitive Attributes (RQ3)}

\sth{To delve deeper into DRFO's ability in ensuring fairness for users refusing sensitive attribute reconstruction due to privacy concerns, we perform supplementary experiments. We explore scenarios where a fraction of users (30\% or 50\%) possess known sensitive attributes. Among the remaining users, we randomly designate a subset, varying from 0\% to 100\%, as individuals permitting sensitive attribute reconstruction, while others opt out of it.}
\sth{ 
To accommodate non-reconstructable
sensitive attributes, we adapt the baselines FLrSA and CGL. For FLrSA, fair constraints are omitted for users not permitting sensitive attribute reconstruction. In the case of CGL, sensitive attributes are randomly assigned to users disallowing reconstruction, mimicking its low-confidence strategy for reconstructed sensitive attributes. A summary of fairness performance across all compared methods is presented in Figure~\ref{fig:refuse-reconstruction}.}
\sth{Figure~\ref{fig:refuse-reconstruction} shows baselines' failures in ensuring fairness when some users do not grant sensitive attribute reconstruction.}
Conversely, DRFO maintains relatively fair results. 
\sth{This validates DRFO's fairness improvement without full sensitive attribute reconstruction by employing a sufficiently large ambiguity set to encompass potential attributes and minimizing the worst-case unfairness within the set.}
\section{Related Work}
\subsection{Fairness in Recommendation}
Fairness research in recommender systems employs two primary paradigms: individual fairness~\cite{wu2021fairness,fairlisa}, which treats similar individuals similarly, and group fairness~\cite{zhang2023chatgpt,recoindependence}, which ensures equitable treatment of different groups.  
Our study specifically focuses on user-side fairness in recommendation at a group level, using Demographic Parity~\cite{dp,dp2,recoindependence} as a representative criterion.
To promote fairness in recommendations, existing approaches can be broadly categorized into three types~\cite{recfair-survey,li2022fairness}: fair pre-processing, fair learning, and fair adjustment. Fair pre-processing methods mitigate bias in data by resampling~\cite{ekstrand2018all} or adding antidote data~\cite{rastegarpanah2019fighting}. 
Fair learning methods modify optimization targets to learn fair recommendation models, either through fairness criteria as regularizers~\cite{yao2017beyond,recoindependence,tensorfair} or adversarial learning~\cite{wu2021fairness,fairlisa}.
Fair adjustment methods~\cite{user-orient,wu2021tfrom} reorder the recommendation results to attain a fairer outcome. 
However, these methods typically require full access to sensitive attributes, which may not be feasible in real-world situations. In contrast, our work tackles the challenge of addressing unfairness when only limited sensitive attributes are available.
\vspace{-3pt}
\subsection{Fairness with Missing Sensitive Attributes}

Fair recommendation research has paid less attention to fairness when some sensitive attributes are missing, but progress has been made in other fairness-focused domains.
Some methods are
specifically designed to enhance Rawlsian Fairness~\cite{rawls-fair} using DRO~\cite{dro-rawlfair} or Adversarial Reweighting Learning~\cite{arl}, but they cannot extend to other fairness metrics like DP. 
Another direction utilizes observable attributes (\eg zip code) as proxies for sensitive attributes~\cite{gupta2018proxy,zhao2022towards}, which differs from our method of reconstructing sensitive attributes. The effectiveness of these approaches relies on the strength of the correlation between the sensitive attribute and the observable non-sensitive one.
Furthermore, some works enhance fairness by reconstructing missing sensitive attributes~\cite{fairgnn,CGL} or sensitive representations~\cite{Grari2021FairnessWT} and leveraging this reconstructed information. 
However, they may have limited effectiveness due to reconstruction errors. In contrast, our DRO-based fair learning method ensures fairness in the presence of reconstruction errors. 

Within recommender systems, FairLISA~\cite{fairlisa} first studies the problem of missing sensitive attributes. It utilizes a discriminator trained with known attributes to remove the sensitive information from embeddings of users whose sensitive attributes are unavailable. 
Unlike it, our method employs the DRO paradigm to offer robust fair recommendations for users with missing sensitive attributes. 
Additionally, FairLISA and our work explore the fairness of limited sensitive attributes from different fairness metrics~\cite{recfair-survey}. While FairLISA focuses on the fairness of removing sensitive information from user embeddings, our focus is on the fairness of inter-group metric differences.

\vspace{-3pt}
\subsection{Distributionally Robust Optimization}
Distributionally Robust Optimization (DRO)~\cite{gda,rahimian2019distributionally} is a well-known framework that handles uncertainty. 
It seeks a solution that performs well under all possible distributions within a defined set by optimizing for the worst-case expected loss~\cite{gda,rahimian2019distributionally}.
Previous studies have utilized DRO to investigate fairness. Hashimoto et al.~\cite{dro-rawlfair} employ DRO to improve the performance of groups with high classification errors, and Wen et al.~\cite{dro-rec-fair} use DRO to improve worst-case user experience in recommendations. These approaches focus on Rawlsian Fairness~\cite{rawls-fair}, which cannot be extended to other fairness criteria (\eg DP). 
RobFair~\cite{dro-rec-fair} utilizes DRO to provide robust fair recommendations, overcoming the challenge of user preference shift which may compromise the fairness achieved during the training phase. 
Unlike RobFair, our work investigates fair recommendations when some sensitive attributes are missing and uses DRO to handle the uncertainty of missing sensitive attributes.
Additionally, Wang et al.~\cite{wang2020robust} introduce two methods that utilize DRO to learn fair classifiers with noisy sensitive attributes, which is most closely related to our work. In difference, our work considers scenarios where sensitive attributes cannot be reconstructed and discusses how to estimate the upper bound of TV distance when the assumption in Theorem~\ref{theo:tv-distance} is slightly violated.
\vspace{-3pt}
\section{Conclusion}
\sth{In this study, we aim to enhance fairness in recommender systems with limited sensitive attributes.}
We point out that directly reconstructing sensitive attributes may suffer from reconstruction errors and have privacy concerns.
To overcome the challenges, we propose a DRO-based fair learning framework. It builds an ambiguity set based on reconstructed sensitive attributes of users who grant reconstruction, and subsequently optimizes the worst-case unfairness within the entire set.
We provide theoretical evidence that our methods could build an appropriate ambiguity set that encompasses the unknown true distribution, thus providing robust fair recommendations with limited sensitive attributes.
Additionally, extensive experiments on two real-world datasets confirm the efficacy of our methods.
In the future, we plan to extend our approach to address fairness in non-binary sensitive attributes. 
Furthermore, our current work only considers using DP as a fairness objective, 
and we aim to extend the approach to other group fairness metrics.
\appendix
\vspace{-2pt}
\section{Appendix}
\subsection{Total Variation Distance}
\label{sec:appendix}
In this section, we first present an introduction to the Total Variation (TV) distance, including its definition and some basic properties. Subsequently, we proceed to prove Theorem 2, and from this proof, we will be able to deduce Theorem 1. Afterward, we discuss the situation when the assumptions of Theorem 2 are not satisfied.
\begin{definition}[Total Variation Distance]
\label{def:TVdistance}
The total variation distance between probability measures $P$ and $Q$ on a measurable space $(\Omega, \mathcal{F})$ is defined as the supremum of the absolute difference between the probabilities assigned to a measurable event $A\in\mathcal{F}$ by $P$ and $Q$.
If $\Omega$ is countable, the total variation distance can also be expressed in terms of the L1-norm~\cite{TV-distance} as follows:
\begin{equation}\small
\vspace{-2pt}
\label{eq:tv-distance-probability}
    TV(P,Q)=\sup\limits_{A\in \mathcal{F}} \left| P(A)-Q(A)\right|=\frac{1}{2}\left \|P-Q  \right \|_{1}.
    \vspace{-2pt}
\end{equation}
\end{definition}


\setcounter{theorem}{1}
\begin{theorem}
\label{theo:tv-distance2}
Given a measurable event $A$, assuming $P(S=s)=P(\hat{S}=s)$ for a given $s \in \left\{ 0,1 \right \}$, the total variation distance between two conditional probabilities $P(A|S=s)$ and $P(A|\hat{S}=s)$ is bounded by the probability that the sensitive attribute is incorrectly reconstructed, i.e., $TV(P(A|S=s), P(A|\hat{S}=s))\leq P(S\ne \hat{S}|S=s)$.
\end{theorem}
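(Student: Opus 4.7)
The plan is to express the TV distance in the L1/supremum form from Equation~\eqref{eq:tv-distance-probability}, then exploit the cancellation that occurs on the event $\{S=\hat{S}\}$ so that only contributions from reconstruction errors remain. Concretely, for any measurable event $A$ I would write
\[
P(A\mid S{=}s)-P(A\mid \hat{S}{=}s)=\frac{P(A,S{=}s)-P(A,\hat{S}{=}s)}{p_s},
\]
where I used the hypothesis $P(S{=}s)=P(\hat{S}{=}s)=:p_s$ to put both terms over a common denominator. This reduces the problem to bounding the numerator uniformly in $A$.

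Next I would split each joint probability according to whether the reconstruction was correct:
\[
P(A,S{=}s)=P(A,S{=}s,S{=}\hat{S})+P(A,S{=}s,S\neq\hat{S}),
\]
and similarly for $P(A,\hat{S}{=}s)$. On the event $\{S=\hat{S}\}$ the conditions $S{=}s$ and $\hat{S}{=}s$ coincide, so the two correct-reconstruction terms are identical and cancel. The numerator therefore collapses to
\[
P(A,S{=}s,S\neq\hat{S})-P(A,\hat{S}{=}s,S\neq\hat{S}),
\]
whose absolute value is at most the maximum of the two non-negative summands, which in turn is bounded by $\max\bigl(P(S{=}s,S\neq\hat{S}),P(\hat{S}{=}s,S\neq\hat{S})\bigr)$.

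The final step uses the prior-matching assumption again: decomposing $P(S{=}s)$ and $P(\hat{S}{=}s)$ through the joint with the other variable and subtracting shows $P(S{=}s,\hat{S}\neq s)=P(\hat{S}{=}s,S\neq s)$, i.e.\ $P(S{=}s,S\neq\hat{S})=P(\hat{S}{=}s,S\neq\hat{S})$. Thus the numerator is bounded by $P(S{=}s,S\neq\hat{S})$, and dividing by $p_s$ gives
\[
\bigl|P(A\mid S{=}s)-P(A\mid \hat{S}{=}s)\bigr|\le \frac{P(S{=}s,S\neq\hat{S})}{P(S{=}s)}=P(S\neq\hat{S}\mid S{=}s),
\]
uniformly in $A$, which yields the theorem after taking the supremum. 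Theorem~1 then follows immediately by specializing $A$ to events of the form $\{(U,V)\in B\}$ so that $P(\cdot\mid S{=}s)$ and $P(\cdot\mid \hat{S}{=}s)$ become $Q^{(s)}$ and $\hat Q^{(s)}$.

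The main subtle point I expect is the cancellation-plus-symmetry step: one must use the equality-of-priors hypothesis twice (once to align denominators, once to equate $P(S{=}s,S\neq\hat{S})$ with $P(\hat{S}{=}s,S\neq\hat{S})$) so that bounding the difference of two potentially dissimilar nonnegative quantities by a single error-rate term is valid. If the prior-matching assumption is relaxed, this symmetry breaks and one instead obtains a looser bound involving $|P(S{=}s)-P(\hat{S}{=}s)|$, which is precisely the regime the appendix flags for further discussion.
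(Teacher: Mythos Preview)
Your proof is correct and follows essentially the same approach as the paper: decompose according to whether the reconstruction is correct, cancel the contributions on $\{S=\hat{S}\}$, and bound the remaining error terms using the symmetry $P(S{=}s,\hat{S}\neq s)=P(\hat{S}{=}s,S\neq s)$ implied by the prior-matching assumption. The only cosmetic difference is that the paper carries out the expansion in conditional-probability form (so the cancellation appears as $P(\hat{S}{=}s\mid S{=}s)-P(S{=}s\mid \hat{S}{=}s)=0$ and the error term factors as $P(\hat{S}\neq S\mid S{=}s)$ times a difference of two conditional probabilities bounded by $1$), whereas you work with joint probabilities over the common denominator $p_s$; the two computations are algebraically equivalent.
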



\begin{proof}
 Assuming {\small$P(S\text{=}s)\text{=}P(\hat{S}\text{=}s)$}, the following equation holds:
\notag
\begin{equation*}
\small
P(\hat{S}=s|S=s)=\frac{P(\hat{S}=s,S=s)}{P(S=s)}=\frac{P(S=s,\hat{S}=s)}{P(\hat{S}=s)}=P(S=s|\hat{S}=s),
\end{equation*}
and we can also deduce $P(\hat{S}\ne s|S\text{=}s)\text{=}P(S\ne s|\hat{S}\text{=}s)$. Then, given a measurable event $A$ and two conditional probabilities $P(A|S=s)$ and $P(A|\hat{S}=s)$. We have:
\notag
\begin{small}
\begin{alignat}{2}
 & \quad TV(P(A|S=s),P(A|\hat{S}=s)) =  \sup_{A}\left|P(A|S=s )-P(A|\hat{S}=s) \right| \\[-2pt]
 &=\sup_{A}\left| P( A|S=s,\hat{S}=s ) P( \hat{S}=s|S=s ) \text{+} P( A|S=s,\hat{S}\ne s ) P( \hat{S}\ne s|S=s ) \right.\\[-10pt]
 &\ - \left. P( A|S=s,\hat{S}= s ) P( S=s|\hat{S}=s ) - P( A|S\ne s,\hat{S}=s ) P( S\ne s|\hat{S} = s )\right| \\[-10pt]
&=\sup_{A}\left|P(A|S=s,\hat{S}=s)\left( P(\hat{S}=s|S=s) - P(S=s|\hat{S}=s)\right) \right.\\[-5pt]
&\ -\left.P(\hat{S}\ne S|S=s)\left(P(A|S=s,\hat{S}\ne s)-P(A|\hat{S}=s,S\ne s) \right)\right|\\[-2pt]
&=\sup_{A}\left| 0 - P(\hat{S}\ne S|S=s) \left(P(A|S=s,\hat{S}\ne s) - P(A|\hat{S}=s,S\ne s) \right)\right| \\[-10pt]
&\leq P(\hat{S}\ne S|S=s).
\qquad \qquad \qquad \qquad \qquad \qquad \qquad \qquad \qquad  \qedhere
 \end{alignat}
\end{small}
\end{proof}
\vspace{-3pt}
The above derivation completes the proof of Theorem~\ref{theo:tv-distance2}, thereby deriving Theorem~\ref{theo:tv-distance}. Note that we assume $P(S)=P(\hat{S})$ in proving Theorem~\ref{theo:tv-distance2}. 
If the assumption is violated, as the general reconstructive ability to accurately restore true sensitive attributes, we presume a slight deviation between $P(S)$ and $P(\hat{S})$, that is, $P(\hat{S}) - P(S)  = \delta p (\left|\delta p\right|\ll P(S))$, then
\notag
\begin{small}
\begin{alignat}{2}
 & \quad TV(P(A|S=s),P(A|\hat{S}=s))\\
&=\sup_{A}\left|P(A|S=s,\hat{S}=s)\left( P(\hat{S}=s|S=s) - P(S=s|\hat{S}=s)\right) \right.\\[-2pt]
& - \left. \left( P(A|S\ne s, \hat{S}=s)P(S\ne s|\hat{S}=s)- P(A|S=s,\hat{S}\ne s)P(\hat{S}\ne s|S=s) \right )\right |
\\[-10pt]
& 
\approx \sup_{A} \left | \left (kP(A|S=s,\hat{S}=s)\delta p -  kP(A|S=s,\hat{S}\ne s)\delta p \right )  \right.\\[-3pt]
&-\left.
P(\hat{S}\ne S|S=s) \left(P(A|S=s,\hat{S}\ne s) - P(A|\hat{S}=s,S\ne s) \right) 
\right|
\\
&\leq P(\hat{S}\ne S|S=s) + k\left|\delta p\right |,
 \end{alignat}
\end{small}
where $k=\frac{P(S=s,\hat{S}=s)}{P(S=s)P(\hat{S}=s)}$, and the approximate equality is obtained through the probability formula and first-order approximation (details omitted for simplicity). 
From the derivation, we know that when there is a marginal discrepancy between $P(S)$ and $P(\hat{S})$, the estimated upper bound on the TV distance should be looser. However, 
the impact on the upper bound is modest, 
rendering its omission reasonable. Also, in our experiments, using only the first term (\ie $P(\hat{S}\ne S |S=s)$) to build ambiguity sets has proven to achieve robust fairness, implying the rationale of directly utilizing the conclusion of Theorem~\ref{theo:tv-distance2} to estimate the upper bound of the TV distance.

\begin{acks}
This work is supported by the National Key Research and Development Program of China (2022YFB3104701), the National Natural Science Foundation of China (62272437), and the CCCD Key Lab of Ministry of Culture and Tourism.
\end{acks}

\bibliographystyle{ACM-Reference-Format}

\bibliography{ref-bib}


\begin{thebibliography}{47}


\ifx \showCODEN    \undefined \def \showCODEN     #1{\unskip}     \fi
\ifx \showDOI      \undefined \def \showDOI       #1{#1}\fi
\ifx \showISBNx    \undefined \def \showISBNx     #1{\unskip}     \fi
\ifx \showISBNxiii \undefined \def \showISBNxiii  #1{\unskip}     \fi
\ifx \showISSN     \undefined \def \showISSN      #1{\unskip}     \fi
\ifx \showLCCN     \undefined \def \showLCCN      #1{\unskip}     \fi
\ifx \shownote     \undefined \def \shownote      #1{#1}          \fi
\ifx \showarticletitle \undefined \def \showarticletitle #1{#1}   \fi
\ifx \showURL      \undefined \def \showURL       {\relax}        \fi
\providecommand\bibfield[2]{#2}
\providecommand\bibinfo[2]{#2}
\providecommand\natexlab[1]{#1}
\providecommand\showeprint[2][]{arXiv:#2}

\bibitem[Ackerman et~al\mbox{.}(1999)]%
        {privacy-study}
\bibfield{author}{\bibinfo{person}{Mark~S Ackerman}, \bibinfo{person}{Lorrie~Faith Cranor}, {and} \bibinfo{person}{Joseph Reagle}.} \bibinfo{year}{1999}\natexlab{}.
\newblock \showarticletitle{Privacy in E-commerce: Examining User Scenarios and Privacy Preferences}. In \bibinfo{booktitle}{\emph{Proceedings of the 1st ACM Conference on Electronic Commerce}}. \bibinfo{pages}{1--8}.
\newblock


\bibitem[Corbett-Davies et~al\mbox{.}(2017)]%
        {dp2}
\bibfield{author}{\bibinfo{person}{Sam Corbett-Davies}, \bibinfo{person}{Emma Pierson}, \bibinfo{person}{Avi Feller}, \bibinfo{person}{Sharad Goel}, {and} \bibinfo{person}{Aziz Huq}.} \bibinfo{year}{2017}\natexlab{}.
\newblock \showarticletitle{Algorithmic Decision Making and the Cost of Fairness}. In \bibinfo{booktitle}{\emph{Proceedings of the 23rd acm sigkdd International Conference on Knowledge Discovery and Data Mining}}. \bibinfo{pages}{797--806}.
\newblock


\bibitem[Dai and Wang(2021)]%
        {fairgnn}
\bibfield{author}{\bibinfo{person}{Enyan Dai} {and} \bibinfo{person}{Suhang Wang}.} \bibinfo{year}{2021}\natexlab{}.
\newblock \showarticletitle{Say No to the Discrimination: Learning Fair Graph Neural Networks with Limited Sensitive Attribute Information}. In \bibinfo{booktitle}{\emph{Proceedings of the 14th ACM International Conference on Web Search and Data Mining}}. \bibinfo{pages}{680--688}.
\newblock


\bibitem[Dekking et~al\mbox{.}(2005)]%
        {dekking2005modern}
\bibfield{author}{\bibinfo{person}{Frederik~Michel Dekking}, \bibinfo{person}{Cornelis Kraaikamp}, \bibinfo{person}{Hendrik~Paul Lopuha{\"a}}, {and} \bibinfo{person}{Ludolf~Erwin Meester}.} \bibinfo{year}{2005}\natexlab{}.
\newblock \bibinfo{booktitle}{\emph{A Modern Introduction to Probability and Statistics: Understanding why and how}}. Vol.~\bibinfo{volume}{488}.
\newblock \bibinfo{publisher}{Springer}.
\newblock


\bibitem[Dwork et~al\mbox{.}(2012)]%
        {dp}
\bibfield{author}{\bibinfo{person}{Cynthia Dwork}, \bibinfo{person}{Moritz Hardt}, \bibinfo{person}{Toniann Pitassi}, \bibinfo{person}{Omer Reingold}, {and} \bibinfo{person}{Richard Zemel}.} \bibinfo{year}{2012}\natexlab{}.
\newblock \showarticletitle{Fairness Through Awareness}. In \bibinfo{booktitle}{\emph{Proceedings of the 3rd Innovations in Theoretical Computer Science Conference}}. \bibinfo{pages}{214--226}.
\newblock


\bibitem[Ekstrand et~al\mbox{.}(2018)]%
        {ekstrand2018all}
\bibfield{author}{\bibinfo{person}{Michael~D Ekstrand}, \bibinfo{person}{Mucun Tian}, \bibinfo{person}{Ion~Madrazo Azpiazu}, \bibinfo{person}{Jennifer~D Ekstrand}, \bibinfo{person}{Oghenemaro Anuyah}, \bibinfo{person}{David McNeill}, {and} \bibinfo{person}{Maria~Soledad Pera}.} \bibinfo{year}{2018}\natexlab{}.
\newblock \showarticletitle{All the Cool Kids, How Do They Fit in?: Popularity and Demographic Biases in Recommender Evaluation and Effectiveness}. In \bibinfo{booktitle}{\emph{Conference on Fairness, Accountability and Transparency}}. PMLR, \bibinfo{pages}{172--186}.
\newblock


\bibitem[Gebru et~al\mbox{.}(2017)]%
        {zip-code}
\bibfield{author}{\bibinfo{person}{Timnit Gebru}, \bibinfo{person}{Jonathan Krause}, \bibinfo{person}{Yilun Wang}, \bibinfo{person}{Duyun Chen}, \bibinfo{person}{Jia Deng}, \bibinfo{person}{Erez~Lieberman Aiden}, {and} \bibinfo{person}{Li Fei-Fei}.} \bibinfo{year}{2017}\natexlab{}.
\newblock \showarticletitle{Using Deep Learning and Google Street View to Estimate the Demographic Makeup of Neighborhoods Across the United States}.
\newblock \bibinfo{journal}{\emph{Proceedings of the National Academy of Sciences}} \bibinfo{volume}{114}, \bibinfo{number}{50} (\bibinfo{year}{2017}), \bibinfo{pages}{13108--13113}.
\newblock


\bibitem[Grari et~al\mbox{.}(2021)]%
        {Grari2021FairnessWT}
\bibfield{author}{\bibinfo{person}{Vincent Grari}, \bibinfo{person}{Sylvain Lamprier}, {and} \bibinfo{person}{Marcin Detyniecki}.} \bibinfo{year}{2021}\natexlab{}.
\newblock \showarticletitle{Fairness without the Sensitive Attribute via Causal Variational Autoencoder}. In \bibinfo{booktitle}{\emph{International Joint Conference on Artificial Intelligence}}.
\newblock


\bibitem[Gupta et~al\mbox{.}(2018)]%
        {gupta2018proxy}
\bibfield{author}{\bibinfo{person}{Maya Gupta}, \bibinfo{person}{Andrew Cotter}, \bibinfo{person}{Mahdi~Milani Fard}, {and} \bibinfo{person}{Serena Wang}.} \bibinfo{year}{2018}\natexlab{}.
\newblock \showarticletitle{Proxy Fairness}.
\newblock \bibinfo{journal}{\emph{arXiv preprint arXiv:1806.11212}} (\bibinfo{year}{2018}).
\newblock


\bibitem[Harper and Konstan(2015)]%
        {harper2015movielens}
\bibfield{author}{\bibinfo{person}{F~Maxwell Harper} {and} \bibinfo{person}{Joseph~A Konstan}.} \bibinfo{year}{2015}\natexlab{}.
\newblock \showarticletitle{The Movielens Datasets: History and Context}.
\newblock \bibinfo{journal}{\emph{Acm Transactions on Interactive Intelligent Systems (TIIS)}} \bibinfo{volume}{5}, \bibinfo{number}{4} (\bibinfo{year}{2015}), \bibinfo{pages}{1--19}.
\newblock


\bibitem[Hashimoto et~al\mbox{.}(2018)]%
        {dro-rawlfair}
\bibfield{author}{\bibinfo{person}{Tatsunori Hashimoto}, \bibinfo{person}{Megha Srivastava}, \bibinfo{person}{Hongseok Namkoong}, {and} \bibinfo{person}{Percy Liang}.} \bibinfo{year}{2018}\natexlab{}.
\newblock \showarticletitle{Fairness without Demographics in Repeated Loss Minimization}. In \bibinfo{booktitle}{\emph{International Conference on Machine Learning}}. PMLR, \bibinfo{pages}{1929--1938}.
\newblock


\bibitem[He et~al\mbox{.}(2017)]%
        {ncf}
\bibfield{author}{\bibinfo{person}{Xiangnan He}, \bibinfo{person}{Lizi Liao}, \bibinfo{person}{Hanwang Zhang}, \bibinfo{person}{Liqiang Nie}, \bibinfo{person}{Xia Hu}, {and} \bibinfo{person}{Tat-Seng Chua}.} \bibinfo{year}{2017}\natexlab{}.
\newblock \showarticletitle{Neural Collaborative Filtering}. In \bibinfo{booktitle}{\emph{Proceedings of the 26th International Conference on World Wide Web}}. \bibinfo{pages}{173--182}.
\newblock


\bibitem[Hu et~al\mbox{.}(2018)]%
        {hu2018does}
\bibfield{author}{\bibinfo{person}{Weihua Hu}, \bibinfo{person}{Gang Niu}, \bibinfo{person}{Issei Sato}, {and} \bibinfo{person}{Masashi Sugiyama}.} \bibinfo{year}{2018}\natexlab{}.
\newblock \showarticletitle{Does Distributionally Robust Supervised Learning Give Robust Classifiers?}. In \bibinfo{booktitle}{\emph{International Conference on Machine Learning}}. PMLR, \bibinfo{pages}{2029--2037}.
\newblock


\bibitem[Jung et~al\mbox{.}(2022a)]%
        {CGL}
\bibfield{author}{\bibinfo{person}{Sangwon Jung}, \bibinfo{person}{Sanghyuk Chun}, {and} \bibinfo{person}{Taesup Moon}.} \bibinfo{year}{2022}\natexlab{a}.
\newblock \showarticletitle{Learning Fair Classifiers with Partially Annotated Group Labels}. In \bibinfo{booktitle}{\emph{Proceedings of the IEEE/CVF Conference on Computer Vision and Pattern Recognition}}. \bibinfo{pages}{10348--10357}.
\newblock


\bibitem[Jung et~al\mbox{.}(2022b)]%
        {evaluation}
\bibfield{author}{\bibinfo{person}{Sangwon Jung}, \bibinfo{person}{Sanghyuk Chun}, {and} \bibinfo{person}{Taesup Moon}.} \bibinfo{year}{2022}\natexlab{b}.
\newblock \showarticletitle{Learning Fair Classifiers with Partially Annotated Group Labels}. In \bibinfo{booktitle}{\emph{Proceedings of the IEEE/CVF Conference on Computer Vision and Pattern Recognition}}. \bibinfo{pages}{10348--10357}.
\newblock


\bibitem[Kamishima et~al\mbox{.}(2018)]%
        {recoindependence}
\bibfield{author}{\bibinfo{person}{Toshihiro Kamishima}, \bibinfo{person}{Shotaro Akaho}, \bibinfo{person}{Hideki Asoh}, {and} \bibinfo{person}{Jun Sakuma}.} \bibinfo{year}{2018}\natexlab{}.
\newblock \showarticletitle{Recommendation Independence}. In \bibinfo{booktitle}{\emph{Conference on Fairness, Accountability and Transparency}}. PMLR, \bibinfo{pages}{187--201}.
\newblock


\bibitem[Kingma and Ba(2014)]%
        {adam}
\bibfield{author}{\bibinfo{person}{Diederik~P Kingma} {and} \bibinfo{person}{Jimmy Ba}.} \bibinfo{year}{2014}\natexlab{}.
\newblock \showarticletitle{Adam: A method for stochastic optimization}.
\newblock \bibinfo{journal}{\emph{arXiv preprint arXiv:1412.6980}} (\bibinfo{year}{2014}).
\newblock


\bibitem[Koren et~al\mbox{.}(2009)]%
        {koren2009matrix}
\bibfield{author}{\bibinfo{person}{Yehuda Koren}, \bibinfo{person}{Robert Bell}, {and} \bibinfo{person}{Chris Volinsky}.} \bibinfo{year}{2009}\natexlab{}.
\newblock \showarticletitle{Matrix Factorization Techniques for Recommender Systems}.
\newblock \bibinfo{journal}{\emph{Computer}} \bibinfo{volume}{42}, \bibinfo{number}{8} (\bibinfo{year}{2009}), \bibinfo{pages}{30--37}.
\newblock


\bibitem[Lahoti et~al\mbox{.}(2020)]%
        {arl}
\bibfield{author}{\bibinfo{person}{Preethi Lahoti}, \bibinfo{person}{Alex Beutel}, \bibinfo{person}{Jilin Chen}, \bibinfo{person}{Kang Lee}, \bibinfo{person}{Flavien Prost}, \bibinfo{person}{Nithum Thain}, \bibinfo{person}{Xuezhi Wang}, {and} \bibinfo{person}{Ed Chi}.} \bibinfo{year}{2020}\natexlab{}.
\newblock \showarticletitle{Fairness without Demographics through Adversarially Reweighted Learning}.
\newblock \bibinfo{journal}{\emph{Advances in Neural Information Processing Systems}}  \bibinfo{volume}{33} (\bibinfo{year}{2020}), \bibinfo{pages}{728--740}.
\newblock


\bibitem[Lambrecht and Tucker(2019)]%
        {lambrecht2019algorithmic}
\bibfield{author}{\bibinfo{person}{Anja Lambrecht} {and} \bibinfo{person}{Catherine Tucker}.} \bibinfo{year}{2019}\natexlab{}.
\newblock \showarticletitle{Algorithmic Bias? An Empirical Study of Apparent Gender-Based Discrimination in the Display of STEM Career Ads}.
\newblock \bibinfo{journal}{\emph{Management Science}} \bibinfo{volume}{65}, \bibinfo{number}{7} (\bibinfo{year}{2019}), \bibinfo{pages}{2966--2981}.
\newblock


\bibitem[Levin and Peres(2017)]%
        {TV-distance}
\bibfield{author}{\bibinfo{person}{David~A Levin} {and} \bibinfo{person}{Yuval Peres}.} \bibinfo{year}{2017}\natexlab{}.
\newblock \bibinfo{booktitle}{\emph{Markov Chains and Mixing Times}}. Vol.~\bibinfo{volume}{107}.
\newblock \bibinfo{publisher}{American Mathematical Soc.}
\newblock


\bibitem[Li et~al\mbox{.}(2021)]%
        {user-orient}
\bibfield{author}{\bibinfo{person}{Yunqi Li}, \bibinfo{person}{Hanxiong Chen}, \bibinfo{person}{Zuohui Fu}, \bibinfo{person}{Yingqiang Ge}, {and} \bibinfo{person}{Yongfeng Zhang}.} \bibinfo{year}{2021}\natexlab{}.
\newblock \showarticletitle{User-oriented Fairness in Recommendation}. In \bibinfo{booktitle}{\emph{Proceedings of the Web Conference 2021}}. \bibinfo{pages}{624--632}.
\newblock


\bibitem[Li et~al\mbox{.}(2022)]%
        {li2022fairness}
\bibfield{author}{\bibinfo{person}{Yunqi Li}, \bibinfo{person}{Hanxiong Chen}, \bibinfo{person}{Shuyuan Xu}, \bibinfo{person}{Yingqiang Ge}, \bibinfo{person}{Juntao Tan}, \bibinfo{person}{Shuchang Liu}, {and} \bibinfo{person}{Yongfeng Zhang}.} \bibinfo{year}{2022}\natexlab{}.
\newblock \showarticletitle{Fairness in Recommendation: A Survey}.
\newblock \bibinfo{journal}{\emph{arXiv preprint arXiv:2205.13619}} (\bibinfo{year}{2022}).
\newblock


\bibitem[Liu et~al\mbox{.}(2020)]%
        {liu2020balancing}
\bibfield{author}{\bibinfo{person}{Weiwen Liu}, \bibinfo{person}{Feng Liu}, \bibinfo{person}{Ruiming Tang}, \bibinfo{person}{Ben Liao}, \bibinfo{person}{Guangyong Chen}, {and} \bibinfo{person}{Pheng~Ann Heng}.} \bibinfo{year}{2020}\natexlab{}.
\newblock \showarticletitle{Balancing between Accuracy and Fairness for Interactive Recommendation with Reinforcement Learning}. In \bibinfo{booktitle}{\emph{Advances in Knowledge Discovery and Data Mining: 24th Pacific-Asia Conference, PAKDD 2020, Singapore, May 11--14, 2020, Proceedings, Part I 24}}. Springer, \bibinfo{pages}{155--167}.
\newblock


\bibitem[Liu et~al\mbox{.}(2019)]%
        {liu2019efficient}
\bibfield{author}{\bibinfo{person}{Yongsheng Liu}, \bibinfo{person}{Hong Qu}, \bibinfo{person}{Wenyu Chen}, {and} \bibinfo{person}{SM~Hasan Mahmud}.} \bibinfo{year}{2019}\natexlab{}.
\newblock \showarticletitle{An Efficient Deep Learning Model to Infer User Demographic Information from Ratings}.
\newblock \bibinfo{journal}{\emph{IEEE Access}}  \bibinfo{volume}{7} (\bibinfo{year}{2019}), \bibinfo{pages}{53125--53135}.
\newblock


\bibitem[Makhneva et~al\mbox{.}(2023)]%
        {50-core-2}
\bibfield{author}{\bibinfo{person}{Elizaveta Makhneva}, \bibinfo{person}{Anna Sverkunova}, \bibinfo{person}{Oleg Lashinin}, \bibinfo{person}{Marina Ananyeva}, {and} \bibinfo{person}{Sergey Kolesnikov}.} \bibinfo{year}{2023}\natexlab{}.
\newblock \showarticletitle{Make your Next Item Recommendation Model Time Sensitive}. In \bibinfo{booktitle}{\emph{Adjunct Proceedings of the 31st ACM Conference on User Modeling, Adaptation and Personalization}}. \bibinfo{pages}{191--195}.
\newblock


\bibitem[Namkoong and Duchi(2016)]%
        {gda}
\bibfield{author}{\bibinfo{person}{Hongseok Namkoong} {and} \bibinfo{person}{John~C Duchi}.} \bibinfo{year}{2016}\natexlab{}.
\newblock \showarticletitle{Stochastic Gradient Methods for Distributionally Robust Optimization with F-divergences}.
\newblock \bibinfo{journal}{\emph{Advances in Neural Information Processing Systems}}  \bibinfo{volume}{29} (\bibinfo{year}{2016}).
\newblock


\bibitem[Oren et~al\mbox{.}(2019)]%
        {oren2019distributionally}
\bibfield{author}{\bibinfo{person}{Yonatan Oren}, \bibinfo{person}{Shiori Sagawa}, \bibinfo{person}{Tatsunori~B Hashimoto}, {and} \bibinfo{person}{Percy Liang}.} \bibinfo{year}{2019}\natexlab{}.
\newblock \showarticletitle{Distributionally Robust Language Modeling}.
\newblock \bibinfo{journal}{\emph{arXiv preprint arXiv:1909.02060}} (\bibinfo{year}{2019}).
\newblock


\bibitem[Rahimian and Mehrotra(2019)]%
        {rahimian2019distributionally}
\bibfield{author}{\bibinfo{person}{Hamed Rahimian} {and} \bibinfo{person}{Sanjay Mehrotra}.} \bibinfo{year}{2019}\natexlab{}.
\newblock \showarticletitle{Distributionally Robust Optimization: A Review}.
\newblock \bibinfo{journal}{\emph{arXiv preprint arXiv:1908.05659}} (\bibinfo{year}{2019}).
\newblock


\bibitem[Rastegarpanah et~al\mbox{.}(2019)]%
        {rastegarpanah2019fighting}
\bibfield{author}{\bibinfo{person}{Bashir Rastegarpanah}, \bibinfo{person}{Krishna~P Gummadi}, {and} \bibinfo{person}{Mark Crovella}.} \bibinfo{year}{2019}\natexlab{}.
\newblock \showarticletitle{Fighting Fire with Fire: Using Antidote Data to Improve Polarization and Fairness of Recommender Systems}. In \bibinfo{booktitle}{\emph{Proceedings of the twelfth ACM International Conference on Web Search and Data Mining}}. \bibinfo{pages}{231--239}.
\newblock


\bibitem[Rawls(2001)]%
        {rawls-fair}
\bibfield{author}{\bibinfo{person}{John Rawls}.} \bibinfo{year}{2001}\natexlab{}.
\newblock \bibinfo{booktitle}{\emph{Justice as Fairness: A Restatement.}}
\newblock \bibinfo{publisher}{Harvard University Press}.
\newblock


\bibitem[Voigt and Von~dem Bussche(2017)]%
        {gdpr}
\bibfield{author}{\bibinfo{person}{Paul Voigt} {and} \bibinfo{person}{Axel Von~dem Bussche}.} \bibinfo{year}{2017}\natexlab{}.
\newblock \showarticletitle{The EU General Data Protection Regulation (GDPR)}.
\newblock \bibinfo{journal}{\emph{A Practical Guide, 1st Ed., Cham: Springer International Publishing}} \bibinfo{volume}{10}, \bibinfo{number}{3152676} (\bibinfo{year}{2017}), \bibinfo{pages}{10--5555}.
\newblock


\bibitem[Wang et~al\mbox{.}(2016)]%
        {wang2016your}
\bibfield{author}{\bibinfo{person}{Pengfei Wang}, \bibinfo{person}{Jiafeng Guo}, \bibinfo{person}{Yanyan Lan}, \bibinfo{person}{Jun Xu}, {and} \bibinfo{person}{Xueqi Cheng}.} \bibinfo{year}{2016}\natexlab{}.
\newblock \showarticletitle{Your Cart Tells You: Inferring Demographic Attributes from Purchase Data}. In \bibinfo{booktitle}{\emph{Proceedings of the ninth ACM International Conference on Web Search and Data Mining}}. \bibinfo{pages}{173--182}.
\newblock


\bibitem[Wang et~al\mbox{.}(2020)]%
        {wang2020robust}
\bibfield{author}{\bibinfo{person}{Serena Wang}, \bibinfo{person}{Wenshuo Guo}, \bibinfo{person}{Harikrishna Narasimhan}, \bibinfo{person}{Andrew Cotter}, \bibinfo{person}{Maya Gupta}, {and} \bibinfo{person}{Michael Jordan}.} \bibinfo{year}{2020}\natexlab{}.
\newblock \showarticletitle{Robust Optimization for Fairness with Noisy Protected Groups}.
\newblock \bibinfo{journal}{\emph{Advances in Neural Information Processing Systems}}  \bibinfo{volume}{33} (\bibinfo{year}{2020}), \bibinfo{pages}{5190--5203}.
\newblock


\bibitem[Wang et~al\mbox{.}(2023)]%
        {recfair-survey}
\bibfield{author}{\bibinfo{person}{Yifan Wang}, \bibinfo{person}{Weizhi Ma}, \bibinfo{person}{Min Zhang}, \bibinfo{person}{Yiqun Liu}, {and} \bibinfo{person}{Shaoping Ma}.} \bibinfo{year}{2023}\natexlab{}.
\newblock \showarticletitle{A Survey on the Fairness of Recommender Systems}.
\newblock \bibinfo{journal}{\emph{ACM Transactions on Information Systems}} \bibinfo{volume}{41}, \bibinfo{number}{3} (\bibinfo{year}{2023}), \bibinfo{pages}{1--43}.
\newblock


\bibitem[Weinsberg et~al\mbox{.}(2012)]%
        {weinsberg2012blurme}
\bibfield{author}{\bibinfo{person}{Udi Weinsberg}, \bibinfo{person}{Smriti Bhagat}, \bibinfo{person}{Stratis Ioannidis}, {and} \bibinfo{person}{Nina Taft}.} \bibinfo{year}{2012}\natexlab{}.
\newblock \showarticletitle{BlurMe: Inferring and Obfuscating User Gender Based on Ratings}. In \bibinfo{booktitle}{\emph{Proceedings of the Sixth ACM Conference on Recommender Systems}}. \bibinfo{pages}{195--202}.
\newblock


\bibitem[Wen et~al\mbox{.}(2022)]%
        {dro-rec-fair}
\bibfield{author}{\bibinfo{person}{Hongyi Wen}, \bibinfo{person}{Xinyang Yi}, \bibinfo{person}{Tiansheng Yao}, \bibinfo{person}{Jiaxi Tang}, \bibinfo{person}{Lichan Hong}, {and} \bibinfo{person}{Ed~H Chi}.} \bibinfo{year}{2022}\natexlab{}.
\newblock \showarticletitle{Distributionally-robust Recommendations for Improving Worst-case User Experience}. In \bibinfo{booktitle}{\emph{Proceedings of the ACM Web Conference 2022}}. \bibinfo{pages}{3606--3610}.
\newblock


\bibitem[Wen et~al\mbox{.}(2023)]%
        {50core}
\bibfield{author}{\bibinfo{person}{Yan Wen}, \bibinfo{person}{Chen Gao}, \bibinfo{person}{Lingling Yi}, \bibinfo{person}{Liwei Qiu}, \bibinfo{person}{Yaqing Wang}, {and} \bibinfo{person}{Yong Li}.} \bibinfo{year}{2023}\natexlab{}.
\newblock \showarticletitle{Efficient and Joint Hyperparameter and Architecture Search for Collaborative Filtering}. In \bibinfo{booktitle}{\emph{Proceedings of the 29th ACM SIGKDD Conference on Knowledge Discovery and Data Mining}}. \bibinfo{pages}{2547--2558}.
\newblock


\bibitem[Wu et~al\mbox{.}(2021b)]%
        {wu2021fairness}
\bibfield{author}{\bibinfo{person}{Chuhan Wu}, \bibinfo{person}{Fangzhao Wu}, \bibinfo{person}{Xiting Wang}, \bibinfo{person}{Yongfeng Huang}, {and} \bibinfo{person}{Xing Xie}.} \bibinfo{year}{2021}\natexlab{b}.
\newblock \showarticletitle{Fairness-aware News Recommendation with Decomposed Adversarial Learning}. In \bibinfo{booktitle}{\emph{Proceedings of the AAAI Conference on Artificial Intelligence}}, Vol.~\bibinfo{volume}{35}. \bibinfo{pages}{4462--4469}.
\newblock


\bibitem[Wu et~al\mbox{.}(2021a)]%
        {wu2021tfrom}
\bibfield{author}{\bibinfo{person}{Yao Wu}, \bibinfo{person}{Jian Cao}, \bibinfo{person}{Guandong Xu}, {and} \bibinfo{person}{Yudong Tan}.} \bibinfo{year}{2021}\natexlab{a}.
\newblock \showarticletitle{TFROM: A Two-sided Fairness-aware Recommendation Model for Both Customers and Providers}. In \bibinfo{booktitle}{\emph{Proceedings of the 44th International ACM SIGIR Conference on Research and Development in Information Retrieval}}. \bibinfo{pages}{1013--1022}.
\newblock


\bibitem[Yao and Huang(2017)]%
        {yao2017beyond}
\bibfield{author}{\bibinfo{person}{Sirui Yao} {and} \bibinfo{person}{Bert Huang}.} \bibinfo{year}{2017}\natexlab{}.
\newblock \showarticletitle{Beyond Parity: Fairness Objectives for Collaborative Filtering}.
\newblock \bibinfo{journal}{\emph{Advances in Neural Information Processing Systems}}  \bibinfo{volume}{30} (\bibinfo{year}{2017}).
\newblock


\bibitem[Yuan et~al\mbox{.}(2022)]%
        {yuan2022tenrec}
\bibfield{author}{\bibinfo{person}{Guanghu Yuan}, \bibinfo{person}{Fajie Yuan}, \bibinfo{person}{Yudong Li}, \bibinfo{person}{Beibei Kong}, \bibinfo{person}{Shujie Li}, \bibinfo{person}{Lei Chen}, \bibinfo{person}{Min Yang}, \bibinfo{person}{Chenyun YU}, \bibinfo{person}{Bo Hu}, \bibinfo{person}{Zang Li}, \bibinfo{person}{Yu Xu}, {and} \bibinfo{person}{Xiaohu Qie}.} \bibinfo{year}{2022}\natexlab{}.
\newblock \showarticletitle{Tenrec: A Large-scale Multipurpose Benchmark Dataset for Recommender Systems}. In \bibinfo{booktitle}{\emph{Thirty-sixth Conference on Neural Information Processing Systems Datasets and Benchmarks Track}}.
\newblock


\bibitem[Zhang et~al\mbox{.}(2023a)]%
        {zhang2023chatgpt}
\bibfield{author}{\bibinfo{person}{Jizhi Zhang}, \bibinfo{person}{Keqin Bao}, \bibinfo{person}{Yang Zhang}, \bibinfo{person}{Wenjie Wang}, \bibinfo{person}{Fuli Feng}, {and} \bibinfo{person}{Xiangnan He}.} \bibinfo{year}{2023}\natexlab{a}.
\newblock \showarticletitle{Is Chatgpt Fair for Recommendation? evaluating fairness in large language model recommendation}. In \bibinfo{booktitle}{\emph{Proceedings of the 17th ACM Conference on Recommender Systems}}. \bibinfo{pages}{993--999}.
\newblock


\bibitem[Zhang et~al\mbox{.}(2020)]%
        {sml}
\bibfield{author}{\bibinfo{person}{Yang Zhang}, \bibinfo{person}{Fuli Feng}, \bibinfo{person}{Chenxu Wang}, \bibinfo{person}{Xiangnan He}, \bibinfo{person}{Meng Wang}, \bibinfo{person}{Yan Li}, {and} \bibinfo{person}{Yongdong Zhang}.} \bibinfo{year}{2020}\natexlab{}.
\newblock \showarticletitle{How to Retrain Recommender System? A Sequential Meta-learning Method}. In \bibinfo{booktitle}{\emph{Proceedings of the 43rd International ACM SIGIR Conference on Research and Development in Information Retrieval}}. \bibinfo{pages}{1479--1488}.
\newblock


\bibitem[Zhang et~al\mbox{.}(2023b)]%
        {fairlisa}
\bibfield{author}{\bibinfo{person}{Zheng Zhang}, \bibinfo{person}{Qi Liu}, \bibinfo{person}{Hao Jiang}, \bibinfo{person}{Fei Wang}, \bibinfo{person}{Yan Zhuang}, \bibinfo{person}{Le Wu}, \bibinfo{person}{Weibo Gao}, {and} \bibinfo{person}{Enhong Chen}.} \bibinfo{year}{2023}\natexlab{b}.
\newblock \showarticletitle{FairLISA: Fair User Modeling with Limited Sensitive Attributes Information}. In \bibinfo{booktitle}{\emph{Thirty-seventh Conference on Neural Information Processing Systems}}.
\newblock


\bibitem[Zhao et~al\mbox{.}(2022)]%
        {zhao2022towards}
\bibfield{author}{\bibinfo{person}{Tianxiang Zhao}, \bibinfo{person}{Enyan Dai}, \bibinfo{person}{Kai Shu}, {and} \bibinfo{person}{Suhang Wang}.} \bibinfo{year}{2022}\natexlab{}.
\newblock \showarticletitle{Towards Fair Classifiers without Sensitive Attributes: Exploring Biases in Related Features}. In \bibinfo{booktitle}{\emph{Proceedings of the Fifteenth ACM International Conference on Web Search and Data Mining}}. \bibinfo{pages}{1433--1442}.
\newblock


\bibitem[Zhu et~al\mbox{.}(2018)]%
        {tensorfair}
\bibfield{author}{\bibinfo{person}{Ziwei Zhu}, \bibinfo{person}{Xia Hu}, {and} \bibinfo{person}{James Caverlee}.} \bibinfo{year}{2018}\natexlab{}.
\newblock \showarticletitle{Fairness-aware Tensor-based Recommendation}. In \bibinfo{booktitle}{\emph{Proceedings of the 27th ACM International Conference on Information and Knowledge Management}}. \bibinfo{pages}{1153--1162}.
\newblock


\end{thebibliography}

\end{document}